\begin{document}

\date{}


\title{\sys: Breaking up BFT with ACID (transactions)}
 

\author{\large {\rm Florian Suri-Payer}, {\rm Matthew Burke}, {\rm Zheng Wang}, {\rm Yunhao Zhang}, {\rm Lorenzo Alvisi}, {\rm Natacha Crooks$^\dagger$} \\
\vspace{2pt}\normalsize {\it Cornell University}, {\it $^\dagger$UC Berkeley}}

\renewcommand{\shortauthors}{Suri-Payer, et al.}

\ccsdesc[500]{Computer systems organization~Dependable and fault-tolerant systems and networks}
\ccsdesc[300]{Security and privacy~Systems security}
\ccsdesc[100]{Security and privacy~Database and storage security}

\keywords{database systems, Byzantine fault tolerance, blockchains, distributed systems}


\begin{abstract}
\vspace{0.5em}

This paper presents~\sys{}, the first transactional,
leaderless Byzantine Fault Tolerant key-value store. 
\sys{} leverages ACID transactions to \textit{scalably} implement the abstraction of
a trusted shared log in the presence of Byzantine actors. Unlike
traditional BFT approaches, \sys{} executes non-conflicting operations
in parallel and commits transactions in a single round-trip during fault-free executions.
\sys{} improves throughput over traditional
BFT systems by four to five times, and is only four times slower
than TAPIR, a non-Byzantine replicated system.
\sys{}'s novel recovery mechanism further minimizes the impact of failures: with 30\% Byzantine clients, throughput drops by less than
25\% in the worst-case.

\end{abstract}

\maketitle
\pagestyle{empty}
\section{Introduction} 

This paper presents \sys{}{\footnote{\changebars{Despite (or because of) his deeply Fawlty character, Basil managed to rise first from paesant to Byzantine emperor (867-886) and then to hotel owner (1975-1979).}}{} a leaderless transactional key-value store that scales the abstraction of a Byzantine-fault tolerant shared log.

Byzantine fault-tolerance (BFT) systems enable safe online data sharing among mutually distrustful parties, as they guarantee correctness in the presence of malicious (Byzantine) actors. These platforms offer exciting opportunities for a variety of applications, including healthcare~\cite{HyperledgerHealth}, financial services~\cite{IBMSettlements, Diem, StateFarmSubrogation}, and supply chain management~\cite{IBMFoodSupply, DeloitteSupply}. 
\changebars{One could for instance design a fully decentralized payment infrastructure between a consortium of banks that omits the need for current centralized automated clearing houses \cite{Diem}. None of the participating banks may fully trust one another, yet they must be willing to coordinate and share resources to provide the joint service.}{Consider the iPhone supply chain: it spans three continents and hundreds of contractors \cite{AppleSup} who may not trust each other, yet must be willing to share product information.} BFT replicated state machines~\cite{castro1999practical,KotlaCACM,clement09aardvark, gueta2018sbft, yin2019hotstuff} and permissioned blockchains~\cite{Hyperledger,EthereumQuorum, buchman2016tendermint, al2017chainspace,kokoris2018omniledger,gilad2017algorand, baudet2019state} are at
the core of these new services: they ensure that mutually distrustful parties produce the same totally ordered log of operations.

The abstraction of a totally ordered log is appealingly simple. A {\em scalable} totally ordered log, however, is not only hard to implement (processing all requests sequentially can become a bottleneck), but also often unnecessary. Most distributed applications primarily consist of logically concurrent operations; supply chains for instance, despite
their name, are actually complex networks of independent transactions. 

Some BFT systems use sharding to try to tap into this parallelism. Transactions that access disjoint shards can execute concurrently, but operations within each shard are still totally ordered. Transactions involving multiple shards are instead executed by running cross-shard atomic commit protocols, which are layered above these totally ordered shards~\cite{kokoris2018omniledger, al2017chainspace, zamani2018rapidchain, padilha2013augustus, padilha2016callinicos}. 
The drawbacks of \changebars{systems that adopt this archtitecture}{these systems} are known: \one~they pay the performance penalty of  redundant coordination---both across shards (to commit distributed transactions) and among the replicas within each shard (to totally order in-shard operations) ~\cite{zhang2016operation, zhang2015tapir, mu2016consolidating}; \two~within each shard,  they give a leader replica undue control over the  total order ultimately agreed upon, raising fairness concerns~\cite{herlihy2016enhancing,
yin2019hotstuff, zhang2020byzantine}; \three~and often they restrict the expressiveness of the transactions they support~\cite{padilha2013augustus, padilha2016callinicos} by requiring that \changebars{their}{the} read and write set be known in advance.

In this paper, we advocate a more principled, performant, and expressive approach to supporting the abstraction of a totally ordered log at the core of all
permissioned blockchain systems. We make our own the lesson of distributed databases, which successfully leverage generic, interactive transactions to implement the abstraction of a sequential, general-purpose log. These systems specifically design highly concurrent protocols that are \textit{equivalent} to a serial schedule~\cite{bernstein1979fas,Papadimitriou1979serializability}. 
Byzantine data processing systems need be no different: rather than aiming to sequence all
operations, they should decouple the {\em abstraction} of a totally ordered sequence of transactions from its {\em implementation}. Thus, we flip the conventional approach: instead of building database-like transactions on top of a sharded, totally ordered BFT log, we directly build out this log
abstraction above a partially-ordered distributed database, where total order is demanded only for conflicting operations.

To this effect, we design \sys{}, a serializable BFT key-value store that implements the abstraction of a trusted shared log, whose novel design addresses each of the drawbacks of traditional BFT systems: \one~it borrows databases' ability to leverage concurrency control to support highly concurrent but serializable transactions, thereby adding parallelism to the log; \two~it sidesteps concerns about the fairness of leader-based systems by giving clients the responsibility of driving the execution of their own transactions; \three~it eliminates redundant coordination by integrating distributed commit with
replication~\cite{zhang2015tapir,mu2016consolidating}, so that, in the absence of faults and contention, transactions can return to clients in a single round trip; and ~\four it improves the programming API, offering support for general interactive transactions that do not require a-priori knowledge of reads and writes.

We lay the foundations for \sys{} by introducing two complementary notions of correctness. \textit{Byzantine isolation} focuses on safety: it 
ensures that correct clients observe a state of the database that could have been produced by correct clients alone. \textit{Byzantine independence} instead safeguards liveness: it limits the influence of Byzantine actors in determining whether a transaction  commits or aborts.
To help enforce these two notions, and disentangle correct clients from the maneuvering of Byzantine actors, 
\sys{}'s design follows the principle of \textit{independent operability}: it enforces safety and liveness through mechanisms that operate on a per-client and per-transaction basis. Thus, \sys{} avoids mechanisms that  enforce isolation through pessimistic locks (which would allow a Byzantine lock holder to prevent the progress of other transactions), adopting instead an optimistic approach to concurrency control.

Embracing optimism in a Byzantine setting comes with its own risks. Optimistic concurrency control \changebars{(OCC)}{}  protocols~\cite{kung1981occ,bernstein1983multiversion,reed1983atomic,xie2015callas,zhang2015tapir} are intrinsically vulnerable to aborting transactions if they interleave unfavorably during validation, and Byzantine faults can compound this vulnerability. Byzantine actors may, for instance, intentionally return stale data, or collude to sabotage the commit chances of correct clients' transactions. Consider
multiversioned timestamp ordering (MVTSO)~\cite{bernstein1983multiversion,reed1983atomic}, which allows writes to become visible to other operations before a transaction commits. While this choice  helps reduce abort rates for contended workloads, it can cause transactions to stall on uncommitted operations.

\sys{}'s ethos of independent operability is key to mitigating this issue. 
The system implements a variant of MVTSO that prevents Byzantine participants from unilaterally aborting correct clients' transactions and ncludes a novel \textit{fallback} mechanism that empowers clients to finish pending transactions issued by others, while preventing Byzantine actors from dictating their outcome.  Importantly, this fallback is a per-transaction recovery mechanism: thus, unlike traditional BFT view-changes, which completely suspend the normal processing of all operations, it can take place without blocking non-conflicting transactions.

Our results are promising: on TPC-C~\cite{tpcc}, Retwis~\cite{retwis} and Smallbank~\cite{difallah2013oltp}), \sys{}'s throughput is  
3.5-5 times higher than layering distributed commit over totally ordered shards running  BFT-SMaRt, a state-of-art PBFT implementation~\cite{bessani2014state} and HotStuff~\cite{yin2019hotstuff} (Facebook Diem's core consensus protocol \cite{baudet2019state}). 
BFT's cryptographic demands, however, still cause \sys to be 2-4 times slower than TAPIR, a recent non-Byzantine distributed database~\cite{zhang2015tapir}. 
In the presence of Byzantine clients, \sys{}'s performance degrades gracefully: with 30\% Byzantine clients, \changebars{the throughput of \sys{}'s correct clients}{\sys{}'s throughput} drops by less than 25\% in the worst-case. 
In summary, this paper makes the following three contributions: 
\begin{itemize}
\item It introduces the complementary correctness notions of Byzantine isolation and Byzantine independence.

\item It presents novel concurrency control, agreement, and fallback protocols that balance the desire for high-throughput in the common case with resilience to Byzantine attacks.  \item It describes \sys{}, a BFT database that guarantees Byz-serializability while preserving Byzantine independence. \sys {} supports interactive transactions, is leaderless, and achieves linear communication complexity.
\end{itemize}

\section{Model and Definitions}
\label{section:model}
We introduce the complementary and system-independent notions of Byzantine isolation and Byzantine independence, which, jointly formalize the degree to which a Byzantine actor can affect transaction progress and safety.
\subsection{System Model}
\label{s:Model}
\sys inherits the standard assumptions of prior BFT work. 
\changebars{A participant is considered correct if it adheres to the protocol specification, and faulty otherwise. Faulty clients and replicas may deviate arbitrarily from
their specification; a strong but static adversary can
coordinate their actions but cannot break standard cryptographic primitives. A shard contains a partition of the data in the system.}{} \changebars{We assume}{It assumes} that the number of faulty replicas
within a shard does not exceed a threshold \textit{f} and that an arbitrary number of clients may be faulty; \changebars{}{But for this bound, } we make no further assumption about the pattern of failures across shards.
We assume that applications authenticate clients and can subsequently audit their actions. 
\changebars{}{Faulty clients and replicas may deviate arbitrarily from
their specification; a strong but static adversary can
coordinate their actions but cannot break standard cryptographic primitives.}
Similar to other BFT systems~\cite{castro1999practical, fischer1985impossibility, KotlaCACM,
clement09aardvark, buchman2016tendermint}, \sys{} makes no synchrony assumption for safety 
but for liveness~\cite{Fischer85Impossibility} depends on partial synchrony~\cite{dwork1988consensus}.
\sys{} also inherits some of the limitations of prior BFT systems: it cannot prevent authenticated Byzantine clients, who otherwise follow the protocol, from overwriting correct
clients' data. It additionally assumes that, collectively,  Byzantine and correct clients have similar processing capabilities, and thus Byzantine clients cannot cause a denial of service attack by 
flooding the system.

\subsection{System Properties}
To express \sys's correctness guarantees, we introduce the notion of  {\em Byzantine isolation}. Database isolation (serializability, snapshot isolation, etc.) traditionally regulates the interaction between concurrently executing transactions;  Byzantine isolation ensures that, even though Byzantine actors may choose to violate ACID semantics, the state observed by correct clients will always be ACID compliant.

We start from the standard notions of transactions and histories introduced by Bernstein et al.~\cite{bernstein1987concurrency}. We summarize them
here and defer a more formal treatment to the Appendix~\ref{sec:proofs}. A transaction \textit{T}  contains a sequence of read and write operations terminating with a commit or an abort.  A history \textit{H} is a partial order of operations representing the interleaving of concurrently executing transactions, such that all conflicting operations are ordered with respect to one another. Additionally, let \textit{C}
be the set of all clients in the system; $\mathit{Crct} \subseteq C$ be the set
of all correct clients; and $Byz \subseteq C$ be the set of all
Byzantine clients. 
A projection $H|_{\mathscr{C}}$ is the subset of the partial order of operations in $H$ that were issued by the set of clients $\mathscr{C}$. We further adopt standard definitions of database isolation: a history
satisfies an isolation level I if the set of operation interleavings in H is allowed by I. Drawing from the notions of BFT linearizability~\cite{liskov2006tolerating} and view serializability~\cite{bernstein1987concurrency}, we then define the following properties:

\par\textbf{Legitimate History} History $H$ is \textit{legitimate} if it was generated by correct participants, {\em i.e.}, $H = H_\mathit{Crct}$.

\par\textbf{Correct-View Equivalent} History $H$ is \textit{correct-view}
equivalent to a history $H'$ if all operation results, commit decisions, and
final database values in $H|_{Crct}$ match those in $H'$.

\par \textbf{Byz-I} Given an isolation level $I$,
a history $H$ is \textit{Byz-I} if there exists a legitimate history $H'$ such that $H$ is correct-view equivalent to $H'$ and $H'$ satisfies $I$.

This definition is not \sys-specific, but captures what it means, for any  Byzantine-tolerant database,  to enforce the guarantees offered by a given isolation level $I$.  Informally, it requires that the states observed by correct clients be explainable by a history that satisfies I and involves only correct participants. It intentionally makes no assumptions on the states that Byzantine clients choose to observe.

\sys{} specifically guarantees Byz-serializability: correct clients will observe a sequence of states that is consistent with a sequential execution of concurrent transactions.  This is a strong safety guarantee, but it does not enforce application progress; a Byz-serializable system could still allow Byzantine actors to systematically abort all transactions. We thus define the notion of \textit{Byzantine independence}, a general system property that bounds the influence of Byzantine participants on the outcomes of correct clients' operations.

\par \textbf{Byzantine Independence} For every operation $o$ issued by a correct
client $c$,  no group of participants containing solely Byzantine actors can
unilaterally dictate the result of $o$.

In a context where clients issue transaction operations, Byzantine independence implies, for instance, that Byzantine actors cannot collude to single-handedly abort a correct client's transaction. This is a challenging property to enforce. It cannot be attained in a leader-based system: if the leader and a client are both Byzantine, they can collude to prevent a transaction from committing by strategically generating conflicting requests. In contrast, \sys{} can enforce  Byzantine independence
as long as Byzantine actors do not have full control of the network, a requirement that is in any case a precondition for any BFT protocol that relies on partial synchrony~\cite{fischer1985impossibility, miller2016honey}. We prove in \tr{Appendix \ref{sec:proofs}}{our supplemental material} that:

\par \textbf{Theorem} \ref{proof:ser}. {\em \sys{} maintains \textit{Byz-serializability}.}

\par \textbf{Theorem} \ref{proof:independence}. {\em \sys{} maintains Byzantine independence in the absence of network adversary.}

\sys{} is designed for settings where Byzantine attacks can
occur, but are infrequent, consistent with the prevalent assumption for permissioned blockchains today; namely,  that to maintain standing in a
permissioned system, clients are unlikely to engage in actively detectable
Byzantine behavior~\cite{haeberlen2010accountable} and, if they cannot break safety undetected, it is preferable for them to be live~\cite{malkhi2019flexible}.  We design \sys{} to be particularly efficient during gracious executions~\cite{clement09aardvark} ({\em i.e.}, synchronous and fault-free) while bounding overheads when misbehavior does occur. In particular, we design aggressive concurrency control mechanisms that maximize common case performance by optimistically exposing uncommitted operations, but ensure that these protocols preserve independent operability, so that \sys{} can guarantee
continued  progress under Byzantine attacks~\cite{clement09aardvark}. We confirm this experimentally in Section~\ref{section:eval}.

\section{System Overview}
\label{section:arch}

\begin{figure}[!th]

\includegraphics[width=0.48\textwidth]{./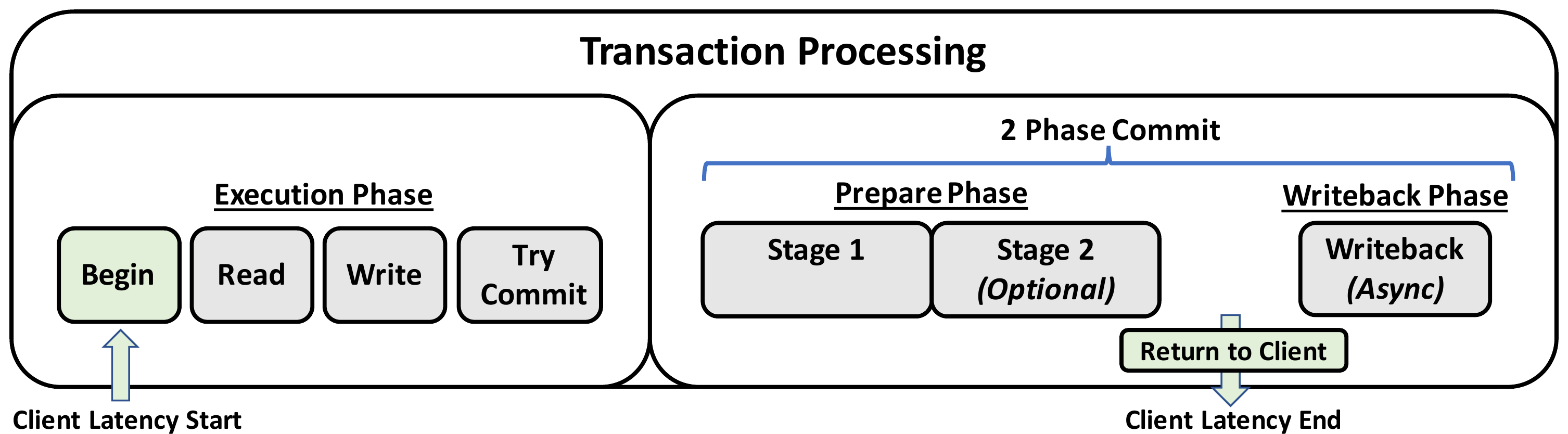}
\caption{Basil Transaction Processing Overview}
\label{fig:overview}
\end{figure}

\sys is a transactional key-value store designed to be scalable and leaderless. Our architecture reflects this ethos. 
\par \textbf{Transaction Processing} Transaction processing is driven by clients (avoiding costly all-to-all communications amongst replicas) and consists of three phases (Figure~\ref{fig:overview}). First, in an \textit{Execution phase}, clients execute individual transactional operations. As is standard in optimistic databases, reads are submitted to remote replicas while writes are buffered locally. \sys{} supports interactive and cross-shard transactions: clients can issue new operations based on the results of
past operations to any shard in the system. In a second \textit{Prepare phase}, individual shards are asked to vote on whether committing the transaction would violate serializability. For performance,  \sys{} allows individual replicas within a shard to process such requests out of order. 
Finally, the client aggregates each shard vote to determine the outcome of the transaction, notifies the application of the final decision, and forwards the decision to the participating replicas in an asynchronous \textit{Writeback phase}.  Importantly, the decision of whether each transaction commits or aborts must be preserved across both benign and Byzantine failures. We describe the protocol in Section~\ref{section:exec}.

\par \textbf{Transaction Recovery} A Byzantine actor could begin executing a transaction, run the prepare phase, but intentionally never reveal its decision. 
Such behavior could prevent other transactions from making progress. \sys{} thus implements a {\em fallback recovery mechanism} (\S\ref{sec:recovery}) that can terminate stalled transactions while preserving Byz-serializability. This protocol, in the common case, allows clients to terminate stalled transactions in a single additional round-trip. 

\par \textbf{Replication} \sys{} uses $n=5f+1$ replicas for each shard. This choice allows \sys to \one preserve Byzantine independence \two commit transactions in a single round-trip in the absence of contention, and \three reduces the message complexity of transaction recovery by a factor of $n$, all features which would not be possible with a lower replication factor. \changebars{We expand on this further in Sections \ref{s:discussion} and \ref{sec:recovery}.}{}

\section{Transaction Processing}
\label{section:exec}
\sys{} takes as its starting point MVTSO~\cite{bernstein1983multiversion}, an aggressive multiversioned concurrency control, and modifies it in three ways: \one~in the spirit of independent operability, it has clients drive the protocol execution; \two~it merges concurrency control with replication; and finally \three~it hardens the protocol against Byzantine attacks to guarantee Byz-serializability while preserving Byzantine independence.

Traditional MVTSO works as follows. A transaction $T$ is assigned (usually by a transaction manager or scheduler) a unique timestamp {\em ts$_T$} that determines its serialization order. As MVTSO is multiversioned, writes in $T$ create new versions of the objects they touch, tagged with {\em ts}$_T$. Reads instead return the version of the read object with the largest timestamp smaller than {\em ts}$_T$ and update that object's  {\em read timestamp} (RTS) to {\em ts}$_T$. Read timestamps are key to preserving serializability: to guarantee that no read will miss a write from a transaction that precedes it in the serialization order, MVTSO aborts all writes to an object from transactions whose timestamp is lower than the object's RTS.

MVTSO is an optimistic protocol, and, as such, much of its performance depends on whether its optimistic assumptions are met. For example, it uses timestamps to assign transactions a serialization order a-priori, under the assumption that those timestamps will not be manipulated; further, it allows read operations to become {\em dependent} on values written by ongoing transactions under the expectation that they will commit. This sunny disposition can make MVTSO particularly susceptible to Byzantine attacks. Byzantine clients could use artificially high timestamps to make lower-timestamped transactions less likely to commit; or they  could simply start transactions that write to large numbers of keys and never commit them: any transaction dependent on those writes would be blocked too.  At the same time, by blocking on dependencies (rather than summarily aborting, as OCC would do) MVTSO leaves open the possibility that blocked transactions may be rescued and brought to commit. In the remainder of this section, we describe how \sys{}, capitalizing on this possibility, modifies MVTSO to harden it against Byzantine faults.

\subsection{Execution Phase}

\par \textbf{Begin()} A client begins a transaction $T$ by optimistically choosing a timestamp \textit{$ts$ $\coloneqq$ (Time, ClientID)} that defines a total serialization order across all clients. Allowing clients to choose their own timestamps removes the need for a centralized scheduler, but makes it possible for  Byzantine clients to create transactions with  arbitrarily high timestamps: objects read by those transactions would cause conflicting transactions with lower timestamps to abort.  To defend against this attack, replicas accept transaction operations if and only if their timestamp is no greater than
$R_{\textit{Time}} + \delta$, where $R_{\textit{Time}}$ is the replica's own local clock. Neither \sys{}'s  safety nor its liveness depend on the specific value of $\delta$, though a well-chosen value will improve the system's throughput. In practice we choose $\delta$ based on the skew of NTP's clock.

\par \textbf{Write(key,value)} Writes from uncommitted transactions raise a dilemma. Making them readable empowers Byzantine clients to stall all transactions that come to depend on them. Waiting to disclose them only when the transaction commits, however, increases the likelihood that concurrent transactions will abort.  We adopt a middle ground: we buffer writes locally until the transaction has finished execution, and make them visible during the protocol's Prepare phase (we call such writes {\em prepared}). This approach allows us to preserve much of the performance benefits of early write disclosure while enforcing independent operability (\S\ref{s:Prepare}).

\par \textbf{Read(key)} In traditional MVTSO, a read for transaction $T$  returns the version of the read object with the highest timestamp smaller than {\em ts}$_T$. When replicas process requests independently, this guarantee no longer holds, as the write with the largest timestamp smaller than {\em ts}$_T$ may have been made visible at replica $R$, but not yet at $R'$: reading from the latter may result in a stale value. Hence, to ensure serializability, transactions in \sys{} go through a \changebars{}{further} concurrency control check at each replica  as part of their Prepare phase (\S~\ref{s:Prepare}). Further care is required, as Byzantine replicas could intentionally return stale (or imaginary!) values that would cause transactions to abort, violating Byzantine independence. These considerations lead us to the following read logic: 
\par \protocol{\textbf{1: C} $\rightarrow$ \textbf{R}: Client C sends read request to replicas.}
C broadcasts an authenticated read request  $m \coloneqq \langle \rd, key, {\mathit ts}_T \rangle$ to \changebars{at least}{$\geq$}  $2f+1$ replicas for shard $S$.
\par \protocol{\textbf{2: R} $\rightarrow$ \textbf{C}: Replica processes client read and replies.}

Each replica $R$ verifies that
the request's timestamp is smaller than $R_{\textit Time} + \delta$.  If not, it
ignores the request; otherwise, it updates {\em key}'s RTS to
{\em ts}$_T$. \sys{} may evict clients with a
history of reading \changebars{keys,}{} but never committing the transaction. Then $R$ returns a signed message \text{$\langle
\textit{Committed,\,Prepared} \rangle_{
\sigma{_R}}$} that contains, respectively, the
latest committed and prepared versions of {\em key} at $R$ 
with timestamps smaller than {\em ts}$_T$.  \textit{Committed} $\equiv$
(\textit{version,\,\ccert}) includes a {\em commit certificate} \ccert
(\S~\ref{s:commit}) proving that {\em version} has committed, while
\textit{Prepared} $\equiv$ ({\em version,\,$id_{T'}$,\,Dep$_{T'}$}) includes a digest identifier for
$T'$ (\S~\ref{s:Prepare}) and the
write-read dependencies {\em Dep}$_{T'}$ of the transaction
$T'$ that created {\em version}. $T'$ cannot commit unless all the transactions
in {\em Dep}$_{T'}$ commit first. 
\par \protocol{\textbf{3: C} $\leftarrow$ \textbf{R}: Client receives read replies.}
A client waits for at least $f+1$  replies (to ensure that at least one comes from a correct replica)  and  chooses the {\em highest-timestamped} version that is {\em valid}. For committed versions, the criterion for validity is straightforward: a committed version must contain a valid \ccert. For prepared versions instead, we require that the same version be returned by at least $f+1$ replicas. Both the validity and timestamp requirement are  important for Byzantine independence. Message validity protects the client's transaction  from becoming dependent on
a version fabricated by Byzantine replicas; and, by choosing the valid reply with the  highest-timestamp, the client is certain to never read a version staler than what it could have read by accessing a single correct replica. 

The client then adds the selected ({\em  key,\,version})  to  {\em ReadSet}$_T$.
If {\em version} was prepared but not committed, it adds a new write-read
dependency to the dependency set {\em Dep}$_T$. Specifically, the client adds to
{\em Dep}$_T$ a tuple ({\em version}, {\em id}$_{T'}$), which will be used during $T$'s Prepare phase to
validate that $T$ is claiming a legitimate dependency.

After $T$ has completed execution, the application tells the client whether it should abort $T$ or instead try to commit it:
\par \textbf{Abort()} The client asks replicas to remove its read timestamps from all keys in {\em ReadSet}$_T$. No actions need to be taken for writes, as \sys{} buffers writes during execution.
\par \textbf{Commit()} The client initiates the Prepare phase, discussed next, which performs the first phase of the multi-shard two-phase commit (2PC) protocol that \sys uses to commit $T$.

\begin{algorithm}
\caption{MVTSO-Check($T$)}\label{mvtso}
\begin{algorithmic}[1]
\If{\textit{$ts_T > localClock + \delta$}} 
\State \Return Vote-Abort
\EndIf

\If{$\exists$ invalid $d \in Dep_T$} 
\State \Return Vote-Abort
\EndIf

\For{\textit{$\forall key,version \in \textit{ReadSet}_T$}}
	\If{$version > ts_T$} \Return MisbehaviorProof 
	\EndIf
        \If{$ \exists T' \in Committed \cup Prepared: key \in \textit{WriteSet}_{T'} $ \newline
        \hspace*{2em} $\land \, version < ts_{T'} < ts_T$}  
          \State  \Return Vote-Abort, \textit{optional: ($T'$, $T'$.\ccert)}  
         \EndIf  
\EndFor

\For{\textit{$\forall key \in \textit{WriteSet}_T$}}
        \If{$\exists T' \in Committed \cup Prepared:$ \newline 
        \hspace*{2em} $\textit{ReadSet}_{T'}\textit{[key].version} < ts_T < ts_{T'}$} 
          \State  \Return Vote-Abort, \textit{optional: ($T'$, $T'$.\ccert)}  
         
        \EndIf
        \If{$\exists RTS \in key.RTS: RTS > ts_T$} 
          \State  \Return Vote-Abort
       \EndIf
\EndFor

\State $Prepared.add(T)$ 

\State \textit{\textbf{wait} for all pending dependencies}
\If{$\exists$ $d \in Dep_T: d.decision= Abort$} 
\State $Prepared.remove(T)$ 
\State \Return Vote-Abort
\EndIf
\State \Return Vote-Commit

\end{algorithmic}
\label{a:MVTSO}
\end{algorithm}

\subsection{Prepare Phase}
\label{s:Prepare}
To preserve independent operability, \sys{} delegates the responsibility for coordinating the 2PC protocol to clients. For a given transaction $T$, the protocols begins with a Prepare phase, which consists of two stages (Figure~\ref{fig:overview}).

In stage \pone, the client collects {\em commit or abort votes} from each shard that $T$ accesses. Determining the vote of a shard in turn requires collecting votes from all the shard's replicas.  To avoid the overhead of coordinating replicas within a shard, \sys{} lets  each replica determine its vote independently, by running a local  \textit{concurrency control check}. The flip side of this design is that, since  transactions may reach replicas  in different orders, even correct replicas within the same
shard may not necessarily reach the same conclusion about $T$. Client $C$ tallies replica votes to learn the vote of each shard and, based on how shards voted, decides whether $T$ will commit or abort. 

Stage \ptwo ensures that $C$'s decision is made durable (or {\em logged}) across failures. $C$ \textit{logs} the evidence on only one shard.  In the absence of contention or failures, \sys's {\em fast path} guarantees that $T$'s decision is already durable and this explicit logging step can be omitted\changebars{, allowing clients to return a commit or abort decision in just a single round trip}{}.

\newpage
\par \textbf{Stage 1: Aggregating votes}
\par \protocol{\textbf{1: C $\rightarrow$ R}: Client sends an authenticated \pone request to all replicas in $S$.}
The message format is $\pone \coloneqq \langle \textsc{prepare}, T \rangle$, where $T$ consists of the transaction's {\em metadata} $\coloneqq$ {\em ts}$_T$, {\em ReadSet}$_T$, {\em WriteSet}$_T$, {\em Dep}$_T$, and of its identifier {\em id}$_T$. To ensure Byzantine clients neither
spoof the list of involved shards nor equivocate $T$'s contents, {\em id}$_T$  is a cryptographic hash of  $T$'s {\em metadata}. 

    \protocol{\textbf{2: R $\leftarrow$ C}: Replica R receives a \pone request and executes the
    concurrency control check.}
Traditional, non-replicated, MVTSO does not require any \changebars{additional validation at commit time,}{validation} as transactions are guaranteed to observe {\em all} the writes that precede them in the serialization order (any "late" write is detected by read timestamps and the corresponding transaction is aborted). This is no longer true in a replicated system: reads could have failed to observe a write performed on a different replica. \sys{} thus runs an additional concurrency control check to determine whether a transaction $T$ should commit and preserve
serializability (Algorithm~\ref{a:MVTSO}). It consists of seven steps: 

\noindent \circled{1} $T$'s timestamp is within the $R$'s time bound (Lines 1-2). \newline
\circled{2} $T$'s dependencies are valid: $R$ has either prepared or committed every  transaction identified by $T$'s dependencies, and the versions that caused the dependencies were produced by said transactions (Lines 3-4).
\newline
\circled{3} Reads in $T$ did not miss any writes. Specifically, the algorithm (Lines 7-8) checks that there does not exist a write 
from a committed or prepared transaction $T'$ that \one~is more recent than the version that $T$'s read and \two~has a timestamp smaller than {\em ts}$_T$ (implying that  $T$ should have observed it). \newline
\circled{4} Writes in $T$ do not cause reads in other {\em prepared
or committed} transactions to miss a write (Lines 9-11). \newline
\circled{5} Writes in $T$ do not cause reads in \textit{ongoing} transactions to miss a write: $T$ is aborted if there exists an RTS greater than {\em ts}$_T$ (Lines 12-13).\newline 
\circled{6} $T$ is prepared and made visible to future reads (Line 14).
\newline
\circled{7} All transactions responsible for $T$'s dependencies have reached a decision. $R$ votes to commit $T$ only if all of its dependencies commit; otherwise it votes to abort (Lines 15-19).

    \protocol{\textbf{3: R $\rightarrow$ C}: Replica returns its vote in a \poner message.}
After executing the concurrency control check, each replica returns to $C$ a Stage1 Reply $\poner \coloneqq \langle T, vote \rangle_{\sigma_R}$. A correct replica executes this check \textbf{at most once} per transaction and stores its vote to answer future duplicate requests (\S \ref{sec:recovery}). 

    \protocol{\textbf{4: C $\leftarrow$ R}: The client receives replicas' votes.}
$C$ waits for \poner messages from the replicas of each shard $S$  touched by $T$. Based on these replies, $C$ determines \one~whether $S$ voted to commit or abort; and \two~whether the received \poner messages constitute a   {\em vote certificate} (\dcert \changebars{$\coloneqq \langle {\mathit id}_T, S, Vote, \{\poner\} \rangle$}{})  that proves  $S$'s vote to be {\em durable}. A shard's vote is durable if  its original outcome can be independently retrieved and verified at any time by any correct client, independent of Byzantine failures or attempts at equivocation. If so, we dub shard $S$ {\em fast}; otherwise, we call it {\em slow}. Votes from a slow shard do not amount to a vote certificate, but simply to a {\em vote tally}\changebars{. Though vote tallies have the same structure as a \dcert, the information they contain is insufficient to make $S$'s vote durable. An additional stage (\ptwo) is necessary to explicitly make S's vote persistent.}{: \sys{} makes them explicitly durable during stage \ptwo of the Prepare Phase.}

Specifically, $C$ proceeds as follows, depending on the set of \poner messages it receives:

\par\noindent\textbf{(1) Commit Slow Path} $(3f+1 \le$ Commit votes $ < 5f+1)$: The client has received at least a  $CommitQuorum\ (CQ)$ of votes, where $\lvert CQ \rvert = \frac{n+f+1}{2} = 3f+1$,  in favor of committing $T$.  Intuitively, the size of {\em CQ} guarantees that two conflicting transactions cannot both commit, since the correct replica that is guaranteed to exist in the overlap of their CQs will enforce isolation. However, $C$ receiving a {\em CQ} of Commit votes is not enough to guarantee that another client
        $C'$, verifying $S$'s vote, would see the same number of Commit votes: after all, $f$ of the replicas in the {\em CQ} could be Byzantine, and provide a different vote if later queried by $C'$.  $C$ thus adds $S$ to the set of slow shards, and records the votes it received in the following {\em vote tally}: $\langle {\mathit id}_T, S, Commit, \{\poner\} \rangle$\changebars{where $\{\poner\}$ is the set of matching (Commit) \poner replies,}{.} \changebars{}{Note that, though a vote tally has the same structure of a \dcert, the information it contains is not sufficient to guarantee that $S$'s vote is durable.}

\par\noindent\textbf{(2) Abort Slow Path} $(f+1 \le$  Abort votes $ < 3f+1)$: A collection of $f+1$ Abort votes constitutes the minimum {\em AbortQuorum (AQ)}, i.e., the minimal evidence sufficient for the client  to count $S$'s vote as Abort in the absence of a conflicting \ccert. Requiring an \textit{AbortQuorum} of at least $f+1$ 
preserves Byzantine independence: Byzantine replicas alone cannot cause a transaction to abort, as \changebars{at least one}{a} correct replica must have found $T$ to be conflicting with a prepared transaction. \changebars{However, such {\em AQ's} are not durable; a client other than $C$ might observe fewer than $f$ abort votes and receive a {\em CQ} instead. $C$ therefore}{The client} records the votes
collected from  $S$ in the following {\em vote tally}: $({\mathit id}_T, S, Abort, \{\poner\}$ and adds $S$ to the slow set for $T$.

\par\noindent \textbf{(3) Commit Fast Path} $(5f+1$ Commit votes$)$: No replica reports a conflict. Furthermore, a unanimous vote ensures that, since  correct replicas never change their vote, \changebars{any}{if some} client $C'$ \changebars{that}{} were to step in for $C$\changebars{}{,  it} would be guaranteed to observe at least a CQ of $3f+1$ Commit votes\changebars{.}{:} \changebars{$C'$ may miss at most $f$ votes because of asynchrony, and at most $f$ more may come from equivocating Byzantine replicas}{}.  $C$ \changebars{thus}{} records the votes collected from  $S$ in the following \dcert: $\langle {\mathit id}_T, S, Commit, \{\poner\} \rangle$ and dubs $S$ fast.

\par\noindent\textbf{ (4) Abort Fast Path} $(3f+1 \le$ Abort votes$)$: $T$ conflicts with a prepared, but potentially not yet committed transaction. $S$'s Abort vote is already durable: since a shard votes to commit only when at least $3f+1$ of its replicas are in favor of it, once $C$ observes $3f+1$ replica votes for Abort from $S$, it is certain that $S$ will never be able to produce $3f+1$ Commit votes, since that would require a correct replica to change its \poner vote or equivocate.  $C$ \changebars{therefore}{then} creates \dcert $\langle {\mathit id}_T, S, Abort, \{\poner\} \rangle$, \changebars{}{where $\{\poner\}$ is the set of matching \poner replies,} and adds $S$ to the set of fast shards.

\par\noindent \textbf{(5) Abort Fast Path} (One Abort with a \ccert for a conflicting transaction $T'$): $C$ validates the integrity of the
\ccert and creates the following \dcert for $S$: $\langle {\mathit id}_T, S, Abort, {\mathit id}_{T'}, \ccert \rangle$. It indicates that $S$ voted to abort $T$ because $T$ conflicts with $T'$, which, as \ccert proves, is a committed transaction. Since \ccert is durable, $C$ knows that the conflict can never be overlooked and that $S$'s vote cannot change; thus, it adds $S$ to the set of fast shards.

After all shards have cast their vote, $C$ decides whether to commit (if all shards voted to commit) or abort (otherwise). Either way, it must make durable the evidence on which its decision is based. As we discussed above, the votes of fast shards  already are; if \changebars{\one there are no slow shards, or \two a single fast shard voted abort}{there are no slow shards (or there is a single fast shard voting to abort)}, then, $C$ can move directly to the Writeback Phase (\S\ref{s:commit}): this is \sys's fast path, which allows $C$ to return a decision for $T$ after a single message round trip. If some shards are in the slow set,  however, $C$ needs to take an additional step to make its tentative 2PC decision durable in a second phase (\ptwo). 
\changebars{Notably though, \sys does \textit{not} need  each slow shard to log its corresponding vote tally in order to make it durable. \fs{ADD REF to Discussion} Instead, \sys first decides whether to commit or abort $T$ based on the shard votes it has received, and then logs its \textit{decision} to only a \textit{single} shard before proceeding to the Writeback phase.}{} 

\par \textbf{Stage 2: Making the decision durable}
\par \protocol{\textbf{5: C $\rightarrow$ R}: The client attempts to make its tentative 2PC decision durable.}
$C$ makes its decision durable by storing an (authenticated) message $\ptwo \coloneqq \langle id_T, decision, \{\vtaly\}, view=0 \rangle$ on {\em one} of the shards that voted in Stage 1 of the Prepare phase; we henceforth refer to this shard, chosen deterministically depending on $T$'s id,  as $S_{\mathit log}$. The set $\{\vtaly\}$ includes the vote tallies of all shards \changebars{to prove the decision's validity. Like many consensus protocols (e.g. \cite{oki1988viewstampeda, castro1999practical, KotlaCACM}), Basil relies on the notion of {\em view} for recovery:}{;} the value of {\em view}  indicates whether this $\ptwo$  message was issued by the client that initated $T$ ({\em view}$=0)$ or it is part of a fallback protocol. We discuss {\em view}'s role in detail in \S\ref{sec:recovery}.

\protocol{\textbf{6: R $\rightarrow$ C}: Replicas in $S_{log}$ receive the $\ptwo$ message and return \ptwor responses.}
Each replica validates that $C$'s 2PC decision is justified by the corresponding vote tallies; if so, the replica logs the decision and acknowledges its success. Specifically, it replies to $C$ with a message of the form \ptwor:$\langle {\mathit id}_T, decision, view_{decision}, view_{current} \rangle_{\sigma_R}$\changebars{; $view_{decision}$ and $view_{current}$ capture additional replica state used during recovery}{}. We once again defer \changebars{an in-depth}{}discussion of \changebars{views}{what a view is} to \S\ref{sec:recovery}. 

    \protocol{\textbf{7: C $\leftarrow$ R}: The client receives a sufficient number of matching replies to confirm a decision was logged.}

$C$ waits for $n-f$  \ptwor messages whose {\em decision} and $view_{decision}$ match, and creates a single shard certificate  $\dcert_{S_{\mathit log}}$: $\langle{\mathit id}_T, S, decision,\{\ptwor \}\rangle$ for the logging shard.

\subsection{Writeback Phase}
\label{s:commit}
$C$ notifies its local application of whether $T$ will commit or  abort, and asynchronously broadcasts to all shards that participated in the Prepare phase a corresponding decision certificate (\ccert for commit; \acert for abort). 

    \protocol{\textbf{1: C $\rightarrow$ R}:
The client asynchronously forwards decision certificates to all participating shards.
    }
$C$ sends to all involved shards a decision certificate (\ccert:$\langle{\mathit id}_T,\,{\mathit Commit},\,\{\dcert_S\}\rangle$ for a Commit decision, \acert: $\langle{\mathit id}_T,\,{\mathit Abort},\,\{\dcert_S\}\rangle$ otherwise). 
We distinguish between the fast, and slow path: On the fast path, \ccert consists of the
full set of Commit \dcert votes from all involved shards, while an \acert need only contain one \dcert vote for Abort. On the slow path, both \ccert and \acert simply include $\dcert_{S_{\mathit log}}$.

    \protocol{\textbf{2: R $\leftarrow$ C}: Replica validates \ccert or \acert and updates store
    accordingly.}
Replicas update all local data structures, including applying writes to the datastore on commit and notifying pending dependencies.

\subsection{An Optimization: Reply Batching}
\label{s:opt}

To amortize the cost of signature generation and verification, \sys{} batches messages \changebars{(Figure~\ref{fig:batching})}{}. Unlike leader-based systems, \sys has no central sequencer through which to batch requests; instead, it implements batching at the replica {\em after} processing messages. To amortize signature generation for replies, \sys{} replicas create batches of $b$ request replies, generate a Merkle tree~\cite{merkle1987digital} for each batch, and sign the root hash. \changebars{They then send to each client $C$ that issued a request: ($i$) the root hash $\mathit root$, ($ii$) a signed version $\sigma$ of the same $\mathit root$, ($iii$) the appropriate request reply $R_C$, and ($iv$) all intermediate nodes (denoted $\pi_C$ in Figure~\ref{fig:batching}) necessary to reconstruct, given $R_C$, the root hash $\mathit root$. Through this batching, the cost of signature generation is reduced by a factor of $b$, at the cost of $log(b)$ additional messages.}{ They then send, to each client, the signed root hash \changebars{(\textit{Merkle commitment} $\coloneqq (root, \sigma)$)}{}, along with its request reply and all intermediate nodes necessary to reconstruct the hash \changebars{(\textit{Merkle proof} $\coloneqq \pi$),}{(} reducing signature generation by a factor of $b$ at the cost of $log(b)$ additional messages\changebars{}{)}.}
To amortize signature verification, \sys{}  uses caching. When a replica successfully verifies the root hash signature in a client message $m$, it caches a map between the corresponding root hash value and the signature. If the replica later receives a message $m'$ carrying  the same  root hash  and  signature as $m$ (indicating that $m$ and $m'$ refer to  the same batch of replies), it can, upon verifying the correctness of the root hash, immediately declare the corresponding signature valid.
\begin{figure}
\begin{center}
\includegraphics[width=0.5\textwidth]{./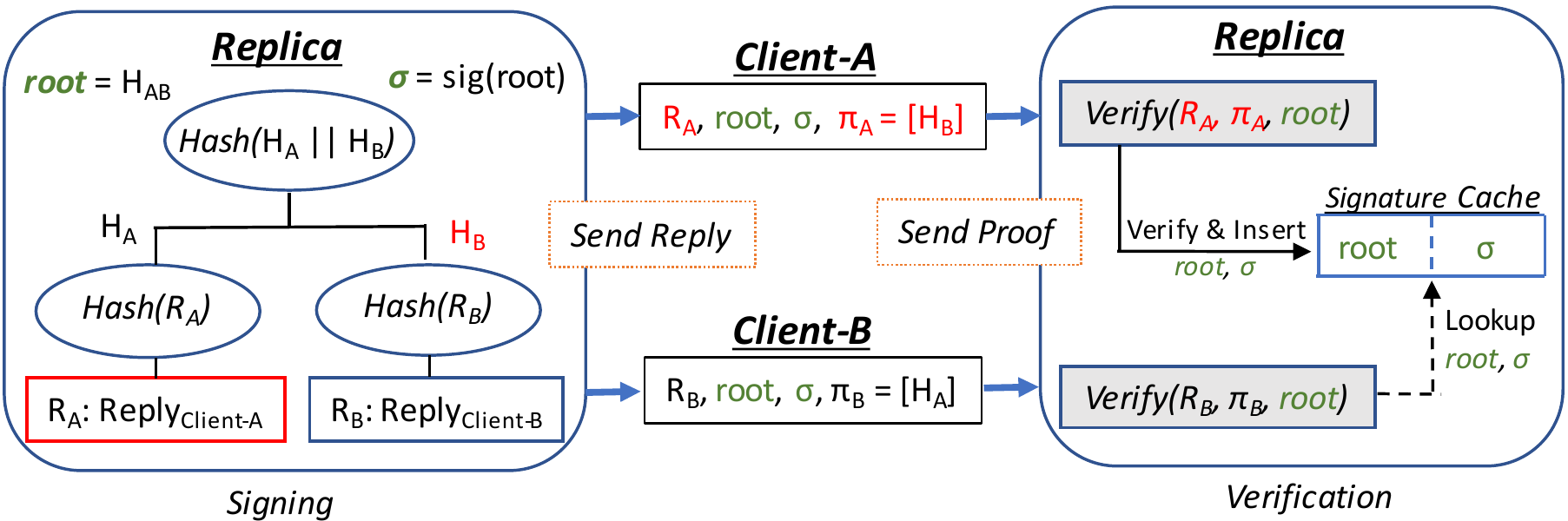}
\end{center}
\caption{\sys batching for two clients. Signature $\sigma$ and batch $root$ (green) are the same across batched replies. Reply $R_C$ and proof $\pi_C$ are unique to each client C and can validate $root$. 
}
\label{fig:batching}
\end{figure}

\subsection{Discussion}
\label{s:discussion}

\noindent{\bf Stripping layers} When 2PC is layered above shards that already order transactions internally using state machine replication, then \textit{within every shard} every correct replica has logged the vote of every other correct replica. 
\sys{}'s design avoids this indiscriminate cost: if all shards are fast, then their votes are already durable without requiring replicas to run any coordination protocol; and if some shards are slow, as we \changebars{mentioned above}{discuss below}, only the replicas of a single shard need to durably log the decision. \changebars{As a result, the overhead of Basil's logging phase remains constant in the number of involved shards.}{}

\changebars{\noindent{\bf Signature Aggregation.}}{} \changebars{\sys, like recent related work \cite{gueta2018sbft, yin2019hotstuff}, can make use of signature aggregation schemes \cite{boneh2018compact, micali2001accountable, boldyreva2003threshold, itakura1983public, shoup2000practical, cachin2005random, gentry2006identity, boneh2003aggregate} to reduce total communication complexity. The client could aggregate the (matching) signed \poner or \ptwor replies into a single signature, thus ensuring that all messages sent by the client remain constant-sized, and hence \sys total communication complexity can be made linear. The current \sys prototype does not implement this optimization. }{}
\fs{Maybe mention the incompatibility with batching... I could also add that in the appendix - I had added a section there about vote subsumption and its incompatibility, and how we can actually get around it. Of course that won't work for batching, but it might be the "right" place to mention}

\changebars{\noindent{\bf Why $n=5f+1$ replicas per shard?} Using fewer replicas has two main drawbacks. First, it eliminates the possibility of a commit fast path. With a smaller replication factor, {\em CQ}s of size $n-2f$ ($f$ can differ because of asynchrony, another $f$ can differ because of equivocation) would no longer be guaranteed to overlap in a correct replica, making it possible for conflicting transactions to commit, in violation of  Byzantine serializability. Second, it precludes Byzantine independence. For progress, clients must  always be able to observe either a {\em CQ} or an {\em AQ}, but, for Byzantine independence, the size of neither quorum must fall below $f$: with $n \le 5f$, it becomes impossible to simultaneously satisfy both requirements.}{}

\section{Transaction Recovery}
\label{section:rec}

For performance, \sys{} optimistically allows transactions to acquire dependencies on uncommitted operations. Without care, Byzantine clients could leverage this optimism to cause transactions issued by correct clients to stall indefinitely. 
To preserve Byzantine independence, transactions must be able to eventually commit even if they conflict with, or acquire dependencies on, stalled Byzantine transactions. To this effect, \sys{} enforces the following invariant: if a transaction acquires a dependency on some other transaction $T$, or is aborted because of a conflict with T,    then a correct participant (client or replica) has enough information to successfully complete $T$. 

Specifically, \sys{} clients whose transactions are blocked or aborted by a stalled transaction $T$ try to finish $T$ by triggering a {\em fallback} protocol. To this end, \sys{} modifies MVTSO to make visible the operations of transactions that have \textit{prepared} only. As $T$'s \pone messages contain all of $T$'s planned writes, any client or replica can use this information to take it upon itself to finish $T$. A correct client is guaranteed to be able to retrieve the \pone  for any of its dependencies, since $f+1$ replicas ({\em i.e.}, at least one correct) must have vouched for that \pone during $T$'s read phase. Likewise, a correct client's transaction only aborts if at least $f+1$ replicas report a conflict. 

\sys{}'s fallback protocol starts with clients: any client blocked by a stalled transaction $T$ can try to finish it. In the \textbf{common case}, it will succeed by simply re-executing the previously described Prepare phase; success is guaranteed as long as replicas within the shard $S_{log}$ that logged shard votes in Stage 2 of $T$'s Prepare phase store the same decision for $T$.  The \textbf{divergent case}, in which they do not, can occur in one of two ways: \one~ a Byzantine client issued $T$ and sent deliberately conflicting \ptwo messages to $S_{log}$; or \two~multiple correct clients tried to finish $T$ concurrently, and collected Prepare phase votes \changebars{(set of \poner messages)}{} that led them to reach (and try to store at $S_{log}$) different decisions. Fortunately,  in \sys{} a Byzantine client cannot generate conflicting \ptwo messages at will: its ability to do so  depends  on the odds (which \S\ref{section:eval}  suggests are low) of receiving, from the replicas of at least one shard, votes that constitute {\em both}  a CQ and an AQ (i.e., {\em 3f+1} Commit votes and {\em f+1} Abort votes). Whatever the cause, if a client trying to finish $T$  observes that replicas in $S_{\mathit log}$ store different decisions, it proceeds to elect a \textit{fallback leader}, chosen deterministically among the replicas in $S_{\mathit log}$. Through this process, \sys{}  guarantees that clients are always able to finish dependent transactions after at most $f+1$ leader elections (since one of them must elect a correct leader).

\label{sec:recovery}
\begin{figure*}
\begin{center}
\includegraphics[width=0.78\textwidth]{./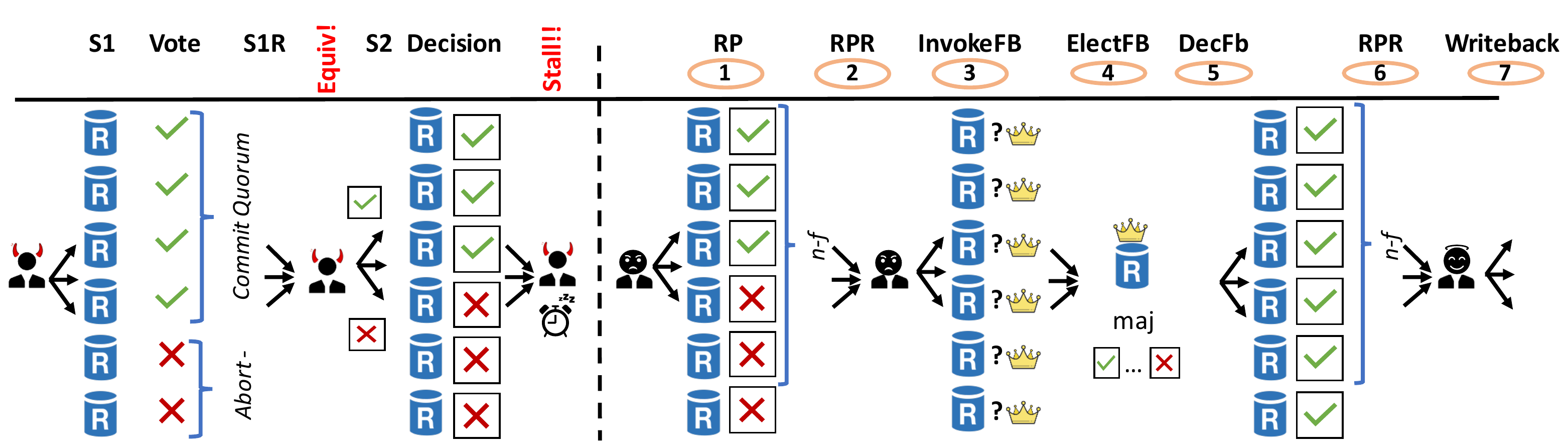}
\end{center}
\caption{Fallback Scenario. A Byzantine client equivocates \ptwor decisions and stalls. An interested client
    invokes the FB}
\label{fig:fallback}
\end{figure*}

Though \sys{}'s fallback protocol is reminiscent of the traditional view-change protocols used to evict faulty leaders, it differs in three significant ways. First, it requires no leader in the common case; further, if electing a fallback leader becomes necessary, communication costs can be made linear in the number of replicas using \changebars{signature aggregation schemes (\S~\ref{s:opt})}{ threshold signatures~cite(gueta2018sbft, shoup2000practical, dan2001short, cachin2005random), as in HotStuff~cite(gueta2018sbft, yin2019hotstuff)}. Second, the fallback election is local, and affects only transactions that access the same operations as the stalled transaction: when a fallback leader is elected for $T$, the scope of its leadership is limited to finishing $T$. In contrast, a standard view-change prevents the system from processing {\em any} operation and the leader, once elected, lords over all consensus operations during its tenure. Finally, \sys' fallback leaders have no say on the ordering of transactions or on what they decide~\cite{zhang2020byzantine}.

As in traditional view-change protocols, each leader operates in a {\em view}. For independent operability, views are defined on a per-transaction basis. Transactions start in $view = 0$;  transactions in that view can be brought to a decision by any client. A replica increases its view number for $T$ each time it votes to elect a new fallback leader.

We now describe the steps of the fallback protocol triggered by a client $C$ wishing to finish a transaction $T$, distinguishing between the aforementioned
common and divergent cases. 

\par \textbf{Common case} In the common case, the client simply resends a \pone message (renamed for clarity \textit{Recovery Prepare} (RP)) in this context) to all the replicas in shards accessed by $T$. Replicas reply with an RPR message which, depending the progress of previous attempts (if any) at completing $T$ corresponds to either \one a \poner message; \two a \ptwor message; or \three a \ccert or \acert certificate. Based on these replies, the client can fast-forward to the corresponding next step in the Prepare or Writeback protocol. In the common case, stalled dependencies thus cause correct clients to experience only a single additional round-trip \changebars{on the fast path, and at most two if logging the decision is necessary (slow path)}{}.

\par \textbf{Divergent case} If, however, the client only receives non-matching \ptwor replies, more complex remedial steps are needed. \ptwor can differ \one in their decision value and \two in their view number $view_{decision}$. The former, as we saw,  is the result  of either explicit Byzantine equivocation or of multiple clients attempting to concurrently terminate $T$. The latter indicates the existence of prior fallback invocations: a Byzantine fallback leader, for instance, could have intentionally left the fallback process hanging. In both scenarios, the client elects a fallback leader. The steps outlined below ensure that, once a correct fallback leader is elected, replicas can be reconciled without introducing live-lock.

    \protocol{\textbf{(1: C $\rightarrow$ R)}: Upon receiving non-matching \ptwor responses, a client starts the fallback process.}
The client sends  $InvokeFB \coloneqq \langle {\mathit id}_T,\,views \rangle$, where {\em views} is the set of signed current views associated with the RPR responses received by the client. 

    \protocol{\textbf{(2: R $\rightarrow$ R$_{\bf FL}$)}: Replicas receive fallback invocation $InvokeFB$ and start election of a fallback leader  R$_{\bf FL}$ for the current view.}
$R$ takes two steps. First, it determines the most up-to-date view $v'$'  held by correct replicas in $S_{log}$ and adopts it as its current view  $view_{current}$.  Second, $R$ sends 
message \fbl : $\langle {\mathit id}_T, decision, view_{current}\rangle_{\sigma_R}$ to the replica with  id $v_{current} + ({\mathit id}_ T \mod n)$ to inform it that $R$ now considers it to be $T$'s fallback leader.

$R$ determines its current view as follows: If a view $v$ appears at least $3f+1$ times in the {\em current views} received in {\em InvokeFB}, then $R$  updates its $view_{current}$  to $max(v+1, view_{current})$; otherwise, it sets its $view_{current}$  to the largest view $v$ greater than its own that appears at least $f+1$ times in {\em current views}. When counting  how frequently a view is present in the received   {\em current views}, $R$ uses vote subsumption: the presence of view $v$ counts as a vote also for all $v' \leq v$.

The thresholds \sys{} adopts  to update a replica's  current view are chosen to ensure that all $4f+1$ correct replicas in $S_{log}$ quickly catch up \changebars{(in case they differ)}{} to the same view,  and thus agree on the identity of the fallback leader. Specifically, by  requiring $3f+1$ matching views to advance to a new view $v$, \sys{} ensures that at least $2f+1$ correct replicas are at most one view behind $v$ at any given time. In turn, this threshold guarantees that \one~a correct client will receive at least $f+1$ matching views for $v' \geq v -1$ in response to its RP message and \two~will include them in its  {\em InvokeFB}.  These $f+1$ matching views are sufficient for all $4f+1$ correct replicas to catch-up to view $v'$, then (if necessary) jointly move to view $v$, and send election messages to the fallback leader of $v$. Refer to \tr{Appendix~\ref{proofs:fb_liveness}, \ref{proofs:subsumption}}{our supplemental material} for additional details and proofs.

    \protocol{\textbf{(3: R$_{\bf FL}$ $\rightarrow$ R)}: Fallback leader R$_{\rm FL}$ aggregates election messages and sends decisions to replicas.}
R$_{\rm FL}$ considers itself elected upon receiving $4f+1$ \fbl messages with matching views $view_{elect}$. It proposes a new decision \textit{$dec_{new}$ $=$ majority}(\{{\em decision}\}) and broadcasts message \fbd:$\langle ( {\mathit id}_T, dec_{new}, view_{elect})_{\sigma_{R_{\rm FL}}},\{\fbl\} \rangle$, which includes the \fbl messages as proof of the sender's leadership.  \changebars{Importantly, an elected leader can only propose \textit{safe} decisions: if a decision has previously been returned to the application or completed the Writeback phase, it must have been \textit{logged} successfully, i.e. signed by $n-f=4f+1$ replicas. Thus, in any set of $4f+1$ \fbl messages the decision must appear at least $2f+1$ times, i.e., a majority. Note that this condition no longer holds when using fewer that $5f+1$ replicas per shard: using a smaller replication factor would require \one an additional (third) round of communication, and \two including proof of this communication in all replica votes (an $O(n)$ increase in complexity), to guarantee that conflicting decision values may not be logged for the same transaction. }{}  

\changebars{}{As in prior work~cite{gueta2018sbft,yin2019hotstuff}, \sys{} could use \changebars{signature aggregation}{threshold signatures} to aggregate \fbl messages into a single signature, keeping the logic linear in both message size and cryptographic costs. }

\par  \protocol{\textbf{(4: R $\rightarrow$ C)}:  Replicas sends a \ptwor message to interested clients.}
Replicas receive a \fbd  message and adopt the message's decision (and $view_{decision}$) as their own if their current view is smaller or equal to $view_{elect}$ \changebars{and the proof is valid.}{.} If so, replicas
update their current view to $view_{elect}$ and forward the decision to all interested clients in a \ptwor message: $\langle {\mathit id}_T, decision, view_{decision}, view_{current} \rangle_{\sigma_R}$.

\begin{figure*}[!t]
    \centering
    \subfloat[Throughput in tx/s]
    {\label{fig:apptput}
    \includegraphics[width=0.46\linewidth,valign=b]
    {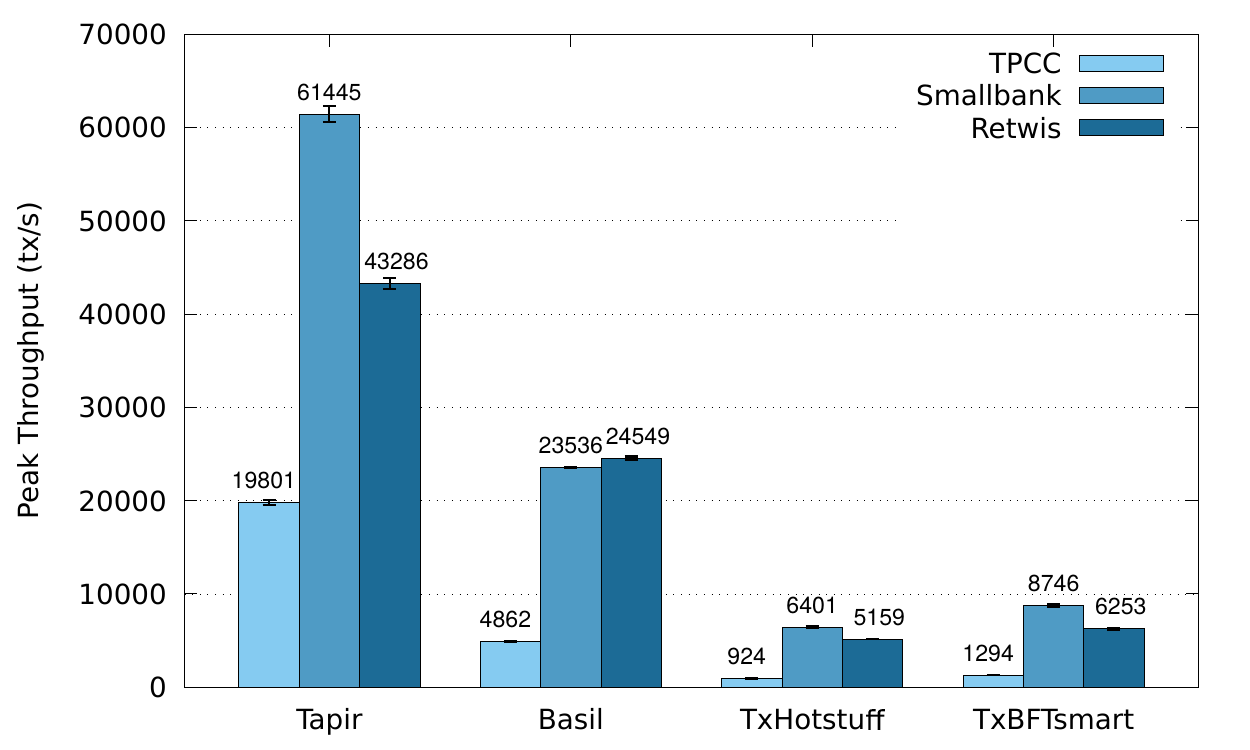}}
    \subfloat[Latency in ms]
    {\label{fig:applat}
    \includegraphics[width=0.46\linewidth,valign=b]
    {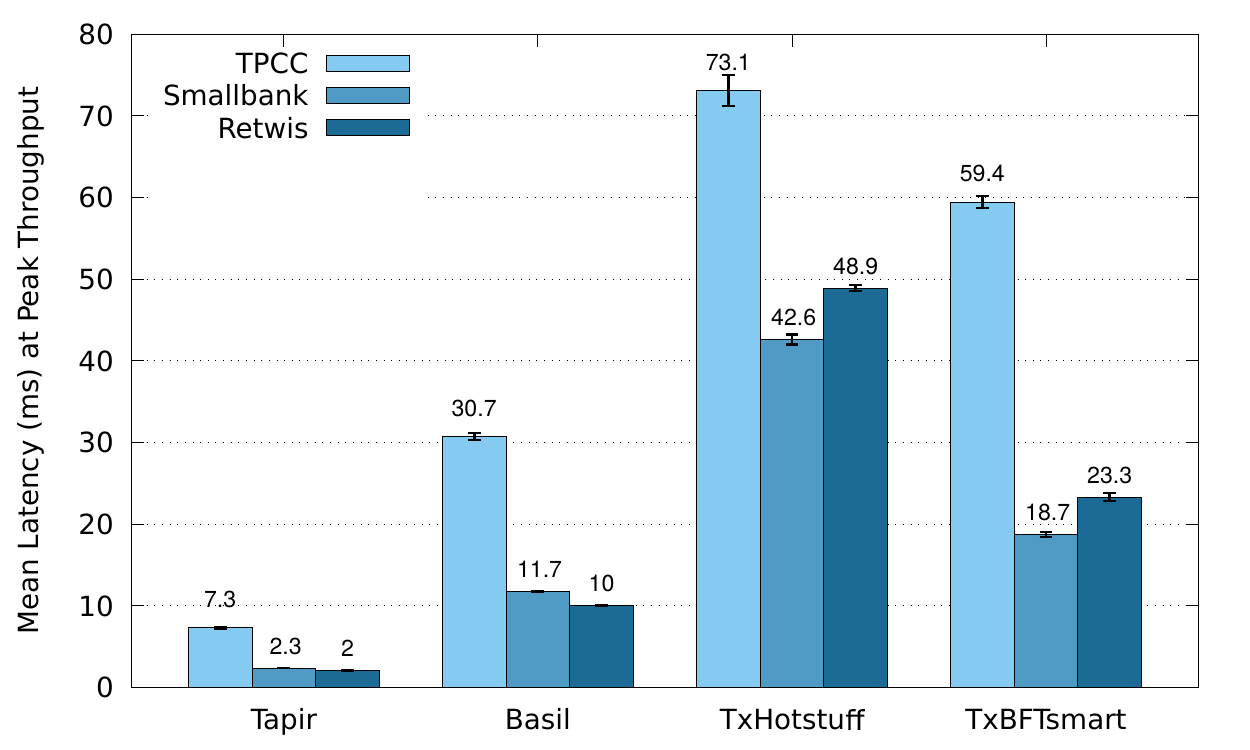}}
    \caption{Application High-level Performance}
\end{figure*}

 \protocol{\textbf{(5: C}: A client creates a \dcert or restarts fallback.}
If the client receives $n-f$ \ptwor with matching decision and decision views, she creates a $\dcert_{S_{\mathit log}}$ and proceeds to the Commit phase.
Otherwise, it restarts the fallback using the newly received $view_{current}$ messages to propose a new view.
$ $\vspace{2mm}

We illustrate the entire divergent case algorithm in Figure~\ref{fig:fallback}, which for simplicity considers a transaction $T$ involving a single shard. \changebars{(With multiple shards, only the common case RP messages (and replies) would involve all shards; the divergent case would always touch only a single shard, $S_{log}$).}{}.

To begin the process of committing  $T$, a Byzantine client sends message \pone and waits for all \poner messages.  Since the  replies it receives allow it to generate both a Commit and Abort quorum, the client chooses to equivocate, sending \ptwo messages for both Commit and Abort. It then stalls. A second correct client who acquired a dependency on $T$ attempts to finish it. It sends \rp messages (1) and receives non-matching \rpr messages (three Commit and two Abort decisions, all from view 0) (2). To redress this inconsistency, the correct client invokes a Fallback with  view 0 (3b). Upon receiving this message, replicas transition to view 1 and send their own decision to view 1's leader in an \fbl message (4). In our example, having received a majority of Commit decisions, the leader  chooses to commit and broadcasts its decision to all other replicas in an \fbd message (5). Finally, replicas send the transaction's outcome to the  interested client (6), who then proceeds to the Writeback phase (7).

\section{Evaluation}
\label{section:eval}

Our evaluation seeks to answer the following questions:
\begin{itemize}
\item How does \sys{} perform on realistic applications? (\S\ref{sec:highlvl})
\item Where do \sys{}'s overheads come from? (\S\ref{sec:overheads})
\item What are the impacts of our optimizations in \sys{}? (\S\ref{sec:opt})
\item How does \sys{} perform under failures? (\S\ref{sec:failures})
\end{itemize}
\vspace{2pt}
\par\textbf{Baselines} We compare against three baselines: \one~TAPIR~\cite{zhang2015tapir},
a recent distributed database that combines replication and cross-shard coordination for greater performance but does not support Byzantine faults; \two~TxHotstuff, a distributed transaction layer built on top of the standard C++ implementation~\cite{libhotstuff} of HotStuff, a recent consensus protocol that forms the basis of several commercial systems \cite{baudet2019state, Celo, Cypherium, Thundercore, meter.io}, most notably Facebook Diem's Libra Blockchain; and  \three~ TxBFT-SMaRt, a distributed
transaction layer built on top of BFT-SMaRt~\cite{bessani2014state, bftsmart}, a state-of-the-art PBFT-based implementation of Byzantine state machine replication~(SMR) \fs{Mention it is java? or do we cut the reference to "standard C++ hotstuff implementation"}. \changebars{We use ed25519 elliptic-curve digital signatures \cite{ed25519, ed25519-donna} for both Basil and the transaction layer of TxHotstuff and TxBFT-SMaRt}{}.
HotStuff and BFT-SMaRt support general-purpose SMR, and are not fully-fledged transactional
systems; we thus supplement their core consensus logic with a coordination layer
for sharding (running 2PC) and an execution layer that implements a standard optimistic concurrency control serializability check~\cite{kung1981occ} and maintains the underlying key-value store. Additionally, we augmented both BFT baselines to also profit from \sys{}' reply batching scheme. 
This architecture follows the standard approach to designing distributed databases (e.g. Google Spanner~\cite{corbett2012spanner}, Hyperledger Fabric~\cite{Hyperledger} or Callinicos \cite{padilha2016callinicos, padilha2013augustus}) where concurrency control and 2PC are layered on top of the
consensus mechanism. Spanner and Hyperledger (built on Paxos and Raft, respectively) are not Byzantine-tolerant, while Callinicos does not support interactive transactions.
\changebars{To the best of our knowledge, this is the first academic evaluation of HotStuff as a component of a large system.}{We dedicate one thread to networking, one to agreement, one to execution, and the remaining threads to serving reads and signature computations in parallel. This split represents the optimal configuration of the baseline.}\footnote{We discussed extensively our setup and implementation with the authors of TAPIR and HotStuff. We corresponded with the authors of Callinicos who were unfortunately unable to locate a fully functional version of their system.}

\par \textbf{Experimental Setup} We use CloudLab~\cite{cloudlab} \texttt{m510} machines (8-core 2.0 GHz CPU, 64 GB RAM, 10 GB NIC, 0.15\textit{ms} ping latency) and run experiments for 90 seconds (30s warm-up/cool-down). \changebars{Clients execute in a closed-loop, reissuing aborted transactions using a standard exponential backoff scheme. We measure the latency of a transaction as the difference between the time the client first invokes a transaction to the time the client is notified that the transaction committed.}{ We run experiments for 90 seconds, following 30 seconds of warm-up and cool-down}  Each system tolerates $f=1$ faults ($n=2f+1$ for TAPIR,  $3f+1$ for HotStuff and BFT-SMaRt).

\subsection{High-level Performance}
\label{sec:highlvl}
We first evaluate \sys{} against three popular benchmark OLTP applications: TPC-C~\cite{tpcc}, Smallbank~\cite{difallah2013oltp}, and the Retwis-based transactional workload used to evaluate TAPIR~\cite{zhang2015tapir}. TPC-C simulates the business logic of an e-commerce framework. We configure it to run with 20 warehouses. As we do not support secondary indices, we create a separate table to \one~ locate a customer's latest order in the \texttt{order status} transaction and \two~ lookup customers by last name in the \texttt{order status} and \texttt{payment} transactions~\cite{crooks2018obladi, su2017tebaldi}. We configure Smallbank, a simple banking application benchmark, with one million accounts. Access is skewed, with 1,000 accounts being accessed 90\% of the time. Users in Retwis, which emulates a simple social network, similarly follow a moderately skewed Zipfian distribution ($0.75$). 
Figures~\ref{fig:apptput} and ~\ref{fig:applat} reports results for the three applications.

\par\noindent{\bf TPC-C} \sys{}'s TPC-C throughput is 5.2x higher than TxHotstuff's and 3.8x higher than TxBFT-SMaRt's -- but 4.1x lower than TAPIR's. All these systems are contention-bottlenecked on the read-write conflict between \texttt{payment} and \texttt{new-order}. \sys{} has 4.2x higher latency than TAPIR: this 
increases the conflict window of contending transactions, and thus the probability of aborts. \sys's higher latency stems from \one its replicas need for signing and verifying signatures; \two its larger quorum sizes for both read and prepare phases; and \three its need to validate read/prepare replies at clients.

Throughput in \sys{} is higher than in TxHotStuff and TxBFT-SMaRt.  \sys's superior performance is directly linked to its lower latency (2.4x lower than TxHotstuff, 1.2x lower than TxBFT-SMaRt).
By merging 2PC with agreement, \sys{} allows transactions to decide \changebars{whether to commit or abort}{} in a single round-trip 96\% of the time (through its fast path) \changebars{}{to commit or abort}. TxHotstuff and TxBFT-SMaRt, which  layer a 2PC protocol over a black-box consensus instance, must instead process and order two requests for each decision (one to Prepare, and one to Commit/Abort), each requiring multiple roundtrips.  In particular, Hotstuff and BFT-SMaRT incur respectively nine and five message delays before returning the Prepare result to clients.  In a contention-heavy application like TPC-C, this higher latency translates directly into lower throughput, since it significantly increases the chances that transactions will conflict. Indeed,  for these applications layering transaction processing on top of state machine replication actually turns a classic performance booster for state machine replication---running agreement on large batches---into a liability, as large batches increase latency and encourage clients to operate in lock-step,
increasing contention artificially. In practice, we find that TxHotstuff  and TxBFT-SMaRt perform best with comparatively small batches (four transactions for TxHotStuff, 16 for TxBFT-SMaRT).

\par\noindent{\bf Smallbank and Retwis} \sys{}  is only 1.8/2.6x slower than TAPIR for these workloads, which are resource bottlenecked for both systems.
The lower contention in Smallbank and Retwis (due to the relatively small transactions) allows \sys{} to use a batch size of 16 for signature generation (up from 4 in TPC-C), thus lowering the cryptographic overhead that \sys{} pays over TAPIR. With this larger batch, both TAPIR and \sys{} are bottlenecked on message serialization/deserialization and networking overheads. Because of their higher latency, however, TxHotStuff and TxBFT-SMaRt continue to be contention bottlenecked: \sys's commit rates for Smallbank and Retwis are respectively  93\% and 98\%, but for TxHotStuff they drop to 75\% and 85\% and for TxBFT-SMaRt to 85\% for both benchmarks. Even on their best configuration (batch size of 16 for TxHotStuff and 64 for TxBFT-SMaRt), \sys outperforms them, respectively, by 3.7x and 2.7x  on Smallbank,  and by 4.8x and 3.9x on Retwis.  

\begin{figure*}[!th]
    \centering
    \subfloat[Impact of signatures]
    {\label{fig:crypto}
    \includegraphics[width=0.30\linewidth,valign=b]
    {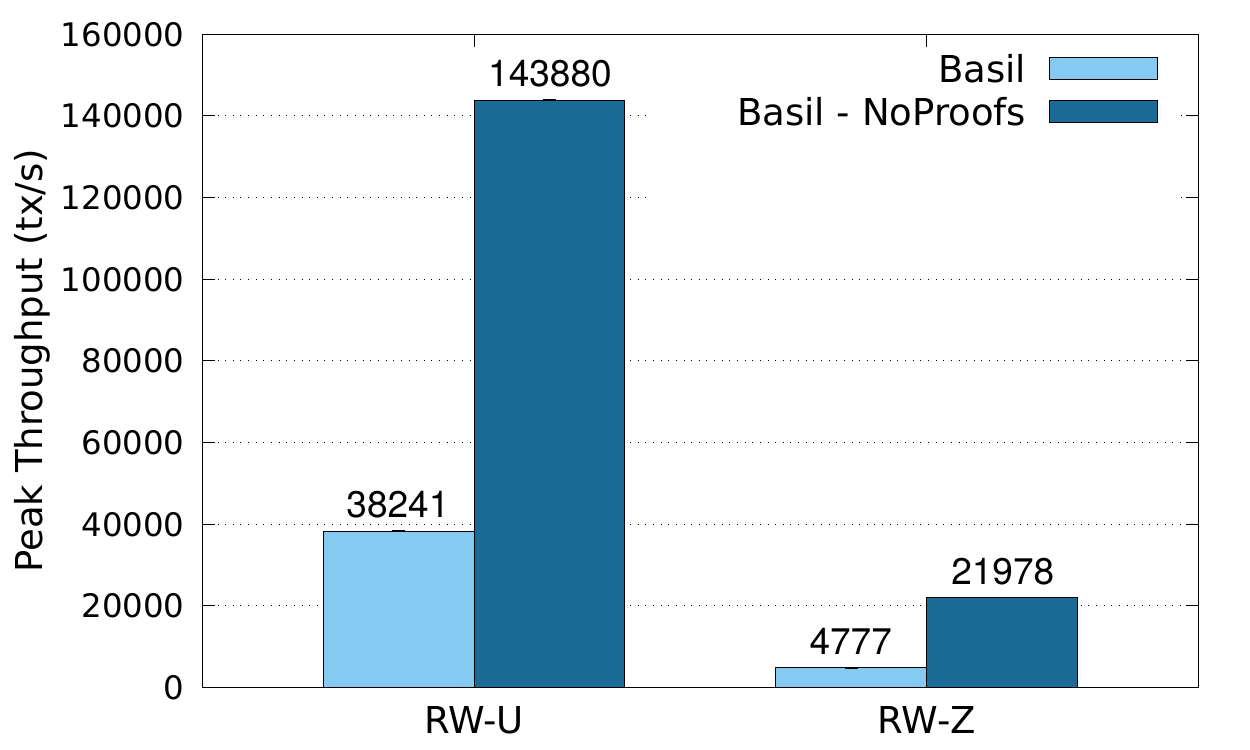}}
    \subfloat[Impact of read quorums]
    {\label{fig:reads}
    \includegraphics[width=0.30\linewidth,valign=b]
    {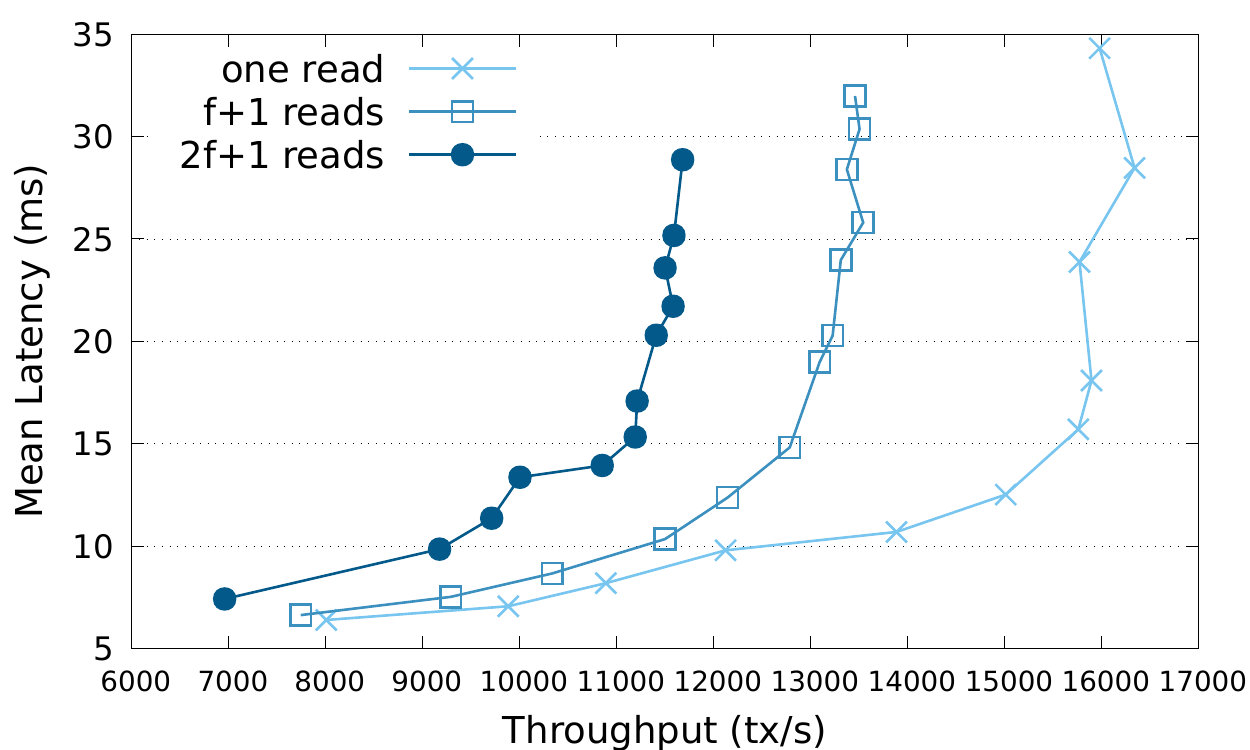}}
    \subfloat[Impact of shard count]
    {\label{fig:shards}
    \includegraphics[width=0.30\linewidth,valign=b]
    {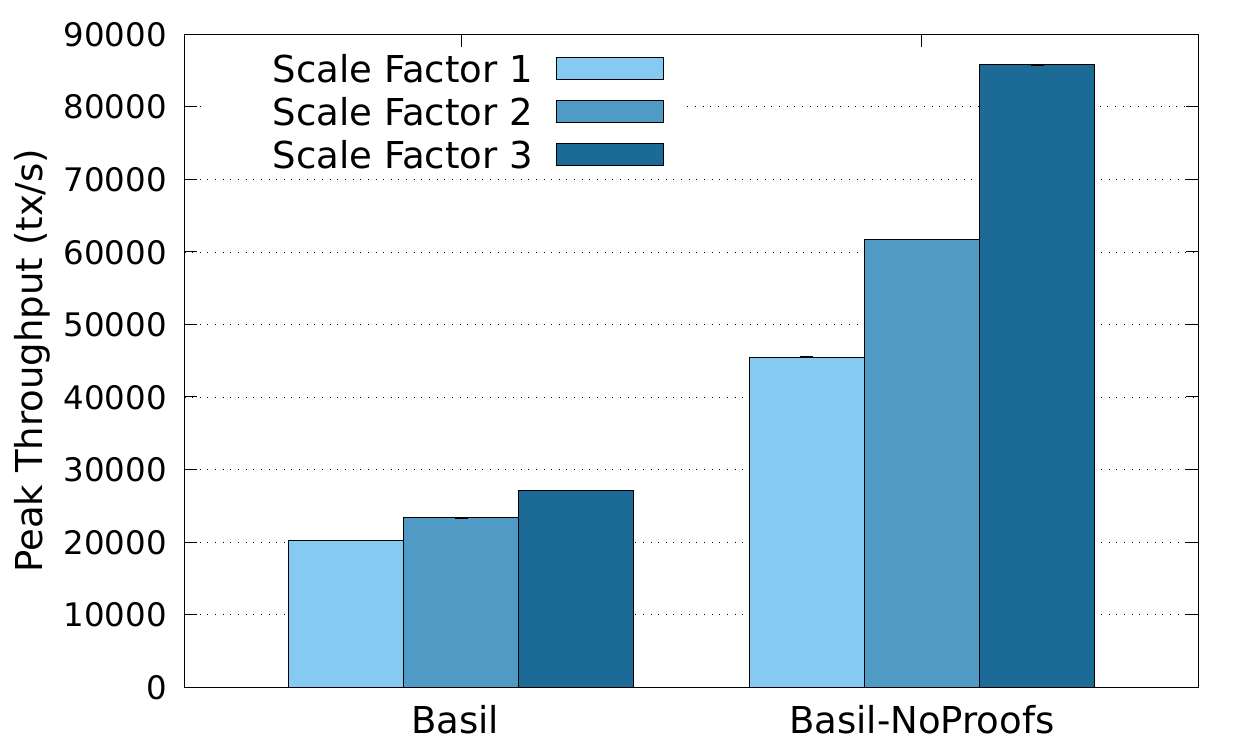}}
    \caption{BFT Overheads}
\end{figure*}

\subsection{BFT Overheads}
\label{sec:overheads}
Besides additional replicas (from $2f+1$ to $5f+1$), tolerance
to Byzantine faults requires both additional cryptography to preserve
Byz-serializability and more expensive reads to preserve Byzantine
independence. To evaluate these overheads, we configure the YCSB-T microbenchmark
suite~\cite{Cooper}  to implement a simple workload of identical
transactions over ten million keys. We distinguish
between a uniform workload (\textit{RW-U}) and a Zipfian workload
(\textit{RW-Z}) with coefficient $0.9$.

We first quantify the cost of cryptography. To do so, we measure the relative
throughput of \sys{} with and without signatures. Transactions consist of 
two reads and two writes. \changebars{}{We use batches of size
16 and 4 respectively for \sys{} on both workloads.} Figure~\ref{fig:crypto} describes our results. We find that \sys{}
without cryptography performs 3.7x better than \sys{} with cryptography
on the uniform workload, and up to 4.6x better on the skewed workload.
Without cryptography, \sys{} can use cores that would have been dedicated for signing/
signature verification for regular operation processing. This effect is more pronounced
on the skewed workload as reducing latency (through increased operation parallelism, lack of batching, and absence of signing/verification latency)
reduces contention, and thus further increases throughput.

In all sharded BFT systems, the number of signatures necessary per transaction grows linearly with 
the number of shards: each replica must verify that other shards also voted to commit/abort a transaction
before finalizing the transaction decision locally. This requires a signature per shard. In Figure~\ref{fig:shards},
we quantify this cost by increasing the number of shards from one to three on the CPU-bottlenecked \textit{RW-U}
workload (three reads/writes). \changebars{}{Each shard is located on a different machine,
and we run the system with a batch size of 16. Additionally, we increase the number of reads and writes to three each.}
\sys{} without cryptography increases by a factor of 1.9 (on average, transactions with three read operations will touch
two distinct shards). In contrast, \sys{}'s throughput increases by only 1.3x.

To guarantee Byzantine independence, individual clients must receive responses from $f+1$ replicas instead of a single
replica. Reading from $2f+1$ replicas (thus sending to 3f+1) increases the chances of a transaction acquiring a valid dependency over reading outdated
data. We measure the relative cost of these different read quorum sizes in Figure~\ref{fig:reads}. We use a simple
read-only workload of 24 operations per transaction, and a batch size of 16. Unsurprisingly, increasing the number of read operations increases the load on each replica
due to the \one~additional signature generations that must be performed, and \two~the additional messages that must be processed. Throughput
decreases by 20\% when reading from $f+1$ replicas (instead of one), and a further 16\% when reading from $2f+1$.

\begin{figure}[H]
\vspace*{-3mm}
    \centering
    \subfloat[Throughput with/without fast path]
    {\label{fig:fast}
    \includegraphics[width=0.5\linewidth,valign=b]
    {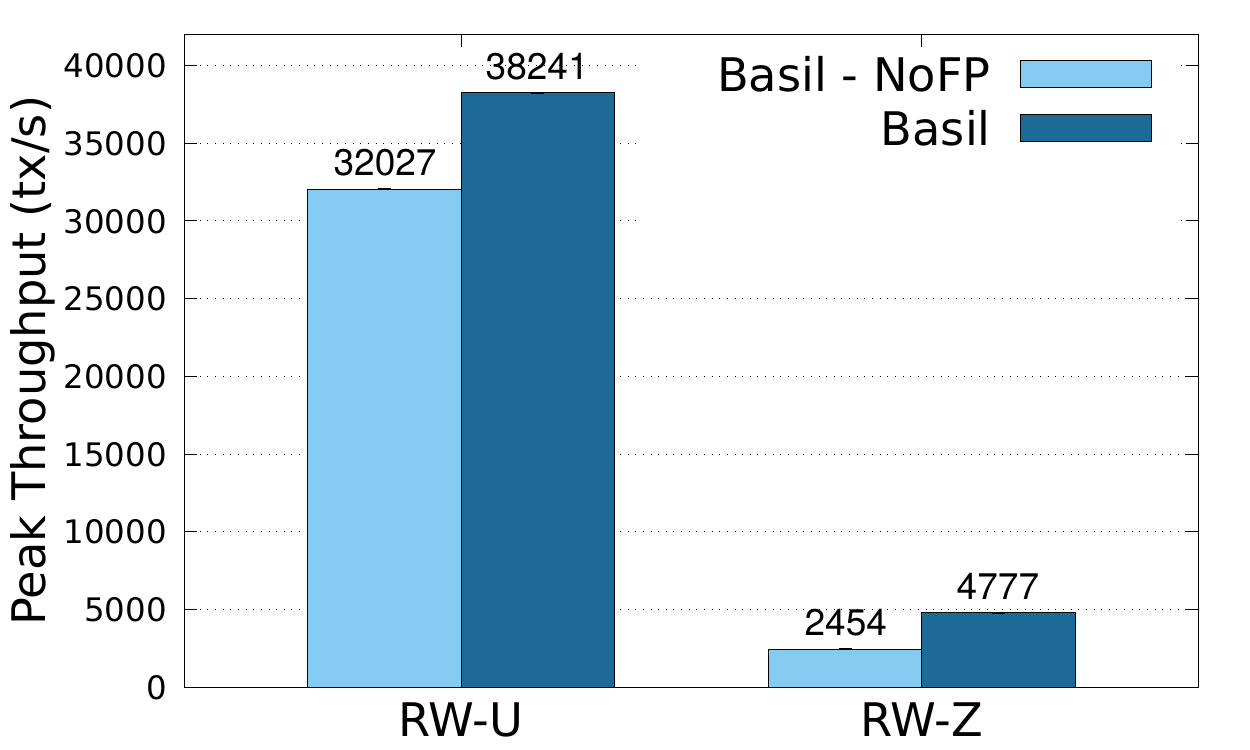}}
    \subfloat[Throughput vs. batch size]
    {\label{fig:batches}
    \includegraphics[width=0.5\linewidth,valign=b]
    {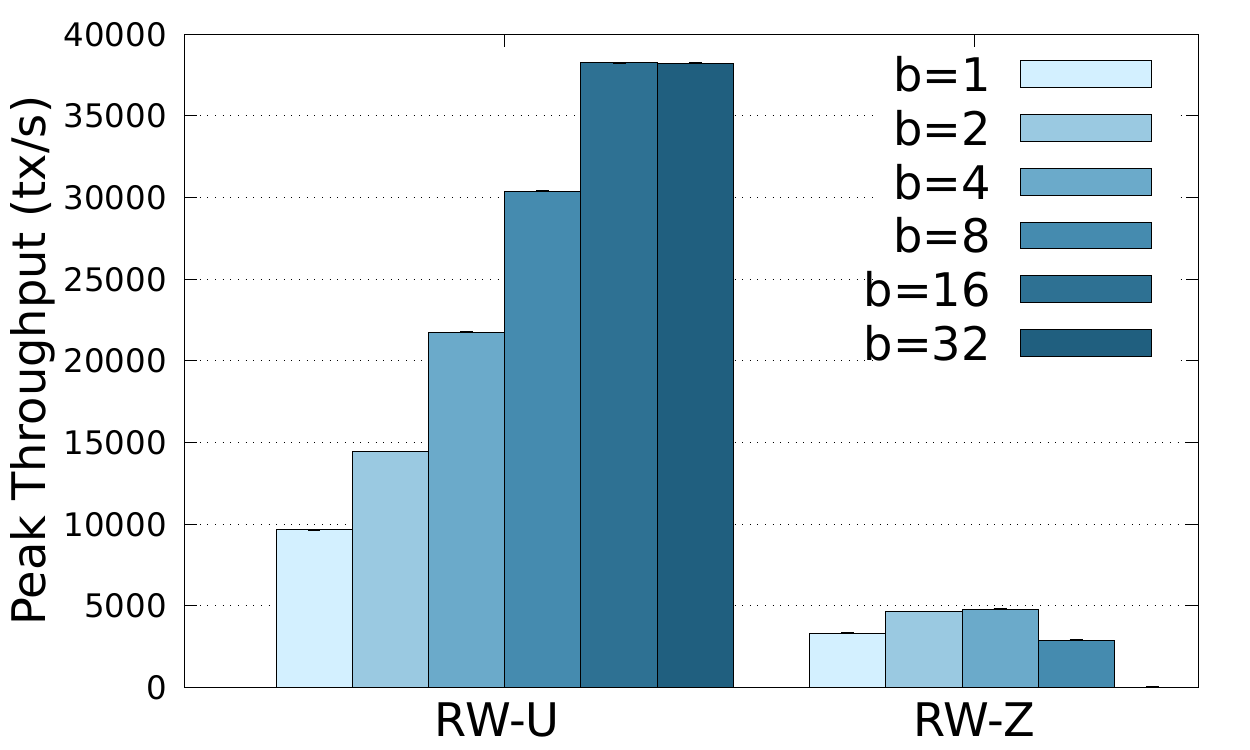}}
    \caption{\sys{} Optimizations}
    \vspace*{-3mm}
  \end{figure}

\subsection{\sys{} Optimizations}
\label{sec:opt}

We measure how \sys's performance benefits from  batching and from its fast path option. We report results for YCBS-T with and without fast path (NoFP) on a workload of two reads and two writes (Figure~\ref{fig:fast}). For the uniform workload, enabling fast paths leads to a 19\% performance increase;  the \ptwor messages that fast paths save contain a signature that must be verified, but require little additional processing. For  a contended Zipfian workload, however, the additional phase incurred by the  slow path  increases contention (as it increases latency): adding the fast path increases throughput by 49\%. Note that Byzantine replicas, by refusing to vote or voting abort, can effectively disable the fast path option; \sys{} can prevent this by removing consistently uncooperative replicas. \fs{In a permissioned system, slow replicas may be treated as faulty (and hence be replaced) since they are harming the system performance.}

\begin{figure*}[!t]
    \centering
    \subfloat[Tput vs. Failures (RW-U)]
    {\label{fig:failures_u}
    \includegraphics[width=0.46\linewidth,valign=b]
    {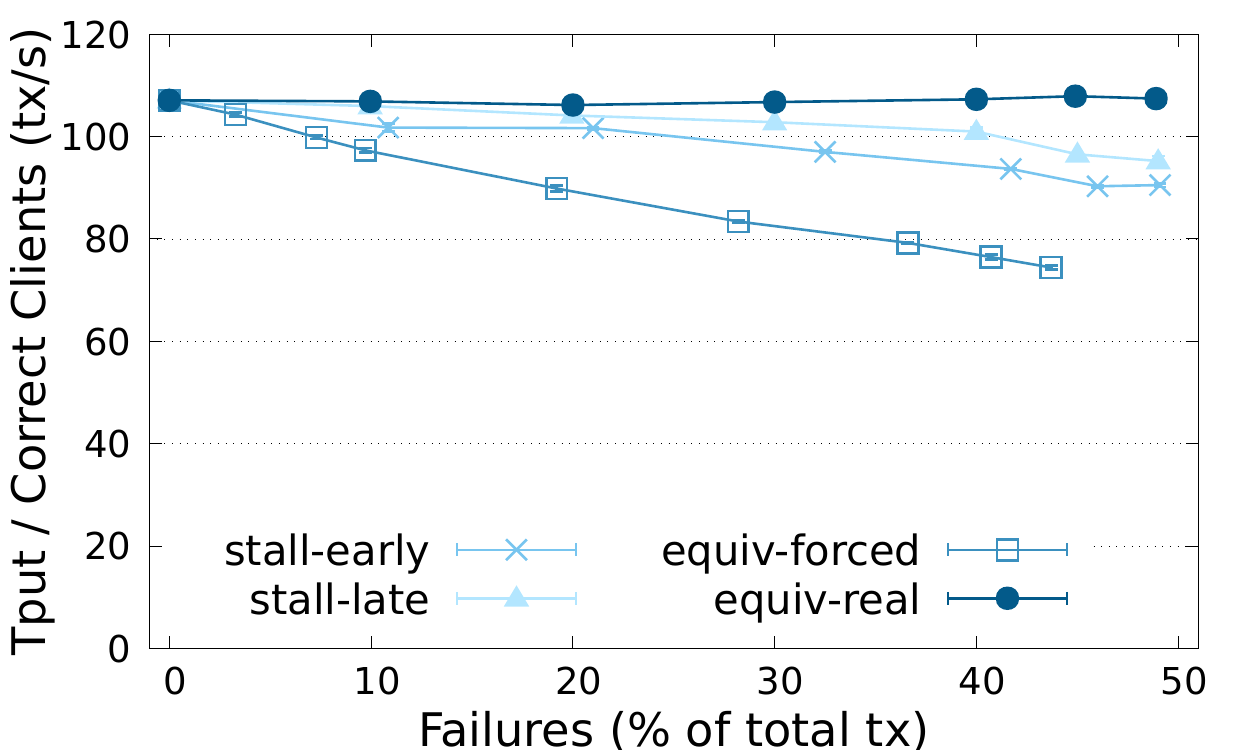}}
    \subfloat[Tput vs. Failures (RW-Z)]
    {\label{fig:failures_z}
    \includegraphics[width=0.46\linewidth,valign=b]
    {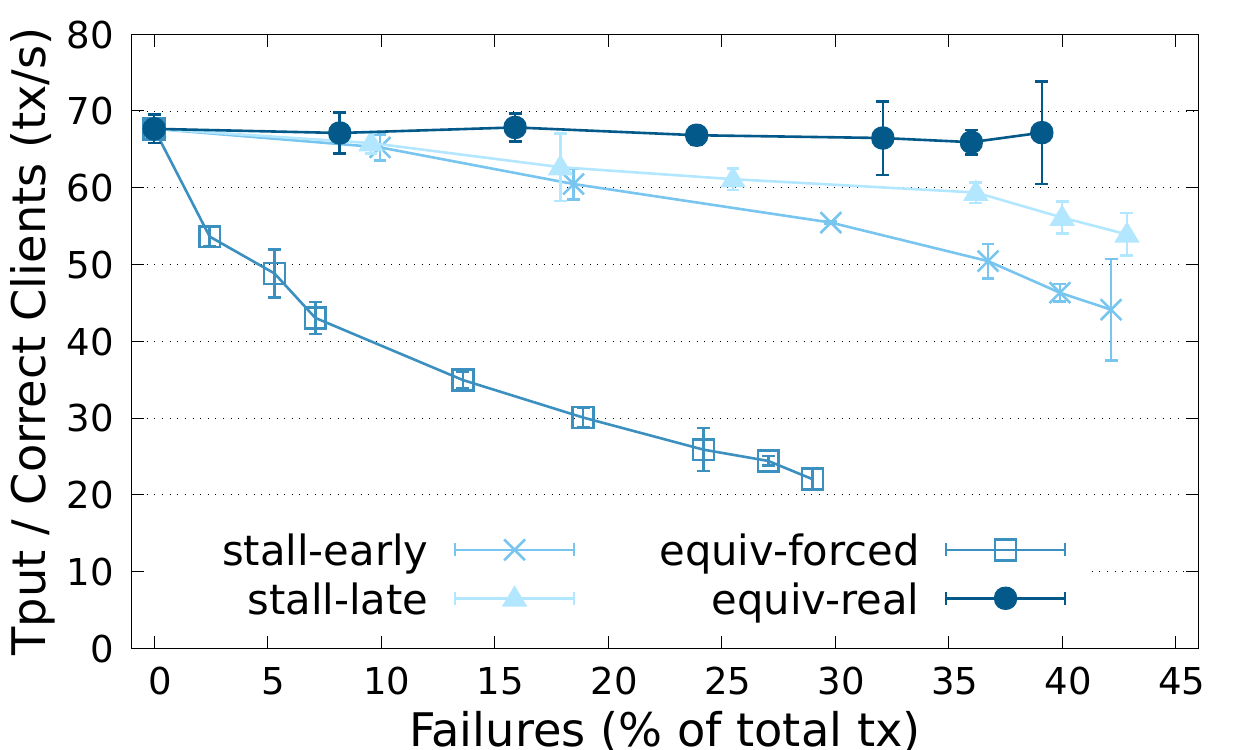}}
    \caption{\sys{} performance under Client Failures}
  \end{figure*}

Next, we quantify the effects of batching. We report the throughput for both workloads (transactions consist of two reads and two writes) while changing the batch size from 1 to 32 ((Figure~\ref{fig:batches}). As expected, on the resource-bottlenecked uniform workload, throughput increases linearly with increased batch size until peaking at 16 (a 4x throughput increase) -- at which point additional hashing costs of the batching Merkle tree nullify any further reduction in signature costs. On the Zipfian workload instead, throughput only increases by up 1.4x, peaking at a small batch size of 4, and degrading afterwards as higher wait times and batch-induced client lock step increase contention (thus reducing throughput).

\subsection{\sys{} Under Failures}
\label{sec:failures}
\sys{} can experience Byzantine failures from both replicas and clients. We have already quantified the effect of Byzantine replicas preventing fast paths (Figure~\ref{fig:fast}) and, by  being unresponsive, forcing ($2f+1$)-sized read quorums (Figure~\ref{fig:reads}). We then focus here on quantifying the effects of clients failing.

\changebars{Basil clients are in charge of their own transactions. Byzantine clients can thus only disrupt honest participants when their own transactions conflict with those of correct clients. Otherwise, they hurt only themselves. A Byzantine client's best strategy for successfully disrupting execution is
\one to follow the (estimated/observed) workload access distribution (only contending transactions cause conflicts), \two choose conservative timestamps (only committing transactions cause conflicts) and \three delay committing a prepared transaction (forcing dependencies to block, and conflicts to abort).}{}

Byzantine clients can stall after sending \pone messages (\textit{stall-early}), or before sending vote certificates $\dcert_S$ (\textit{stall-late}). To equivocate, they must instead receive votes that allow them to  generate, and send to replicas,  conflicting $\dcert$ certificates. \changebars{We remark that equivocating, and hence triggering the divergent case recovery path,  is \textit{not} a strategy that can be pursued deterministically or even just reliably, since its effectiveness depends on the luck of the draw in the composition  of \poner's quorum.}{}
We evaluate two scenarios: a worst-case, in which we artificially allow clients to always equivocate (\textit{equiv-forced}), and a realistic setup where clients only equivocate when the set of messages received allows them to (\textit{equiv-real}). 
For both scenarios, we report  the throughput of correct clients (measured in $tx/s/client_{correct}$). We keep a constant number of clients, a fraction of which exhibits Byzantine behavior in some percentage \changebars{of their newly admitted}{} transactions; we refer to those transactions as faulty). \changebars{Faulty transactions that abort because of  contention are not retried, while correct transactions that abort for the same reason may need to re-execute (and hence prepare) multiple
times until they commit. When measuring throughput, we report the percentage of faulty transactions as a fraction of all processed (not admitted) transactions. }{} Figures~\ref{fig:failures_u} and~\ref{fig:failures_z} illustrate our results.

For the RW-U workload, the additional CPU cost of fallback invocations on the CPU-bottlenecked servers causes correct clients' throughput to decrease slowly and linearly. Clients invoke fallbacks only rarely, as there is no contention. Moreover, stalled transactions can be finished in a single round-trip (a pair of \rp, \rpr) messages thanks to the fallback's common case and fast path. The small throughput
drop over \textit{stall-late} is an artefact of Byzantine clients directly starting a new transaction before finishing the old one, increasing the throughput of malicious clients over correct ones.  The cost of 
forced equivocation is higher as it requires three rounds of message processing (fallback invocation, election, and decision adoption). In reality,
\textit{equiv-real} sees no throughput drop, as the lack of contention makes equivocation impossible: Byzantine clients cannot build the necessary  conflicting \dcert's.

The RW-Z workload is instead contention-bottlenecked: higher latency implies more conflicts, and thus lower throughput. The impact of \textit{stall-late} stalls remains small, as all affected clients still recover the transaction on the common case fast path (incurring only one extra roundtrip).  The performance degradation is slightly higher in \textit{stall-early}, as stalled-early transactions do not finish the transactions on which they depend before stalling. Instead, affected correct clients must themselves invoke the fallback for stalled dependent transactions, which increases latency.  In practice, dependency chains remain small: because of the Zipfian nature of the workload, correct clients quickly notice stalled transactions and aggressively finish them. We note that stalled transactions do not themselves increase contention: \sys allows
the stalled  writes of prepared but uncommitted transactions to become visible to other clients as dependencies. A stalled transaction thus causes dependency chains to grow, but does not increase the conflict window. The throughput drop that results from forcing equivocation failures is, in contrast, significant: equivocation requires three round-trip to resolve and may lead to transactions aborting and to cascading aborts in dependency chains. In practice, Byzantine clients in
\textit{equiv-real} are
once again rarely successful in obtaining the
conflicting \poner messages necessary to equivocate, even in a contended workload (0.048\% of the time for 40 \% faulty transactions) as 99\% of transactions commit or abort on the fast path.

\changebars{Basil expects some level of cooperation from its participants and can remove, without prejudice, clients that frequently stall or timeout (in addition to explicitly misbehaving clients). To avoid spurious accusations towards correct clients, such exclusion policies can be lenient, since Basil's throughput remains robust even with high failure rates.}{}

\section{Related Work}  
\label{section:rel}
State machine replication (SMR) \cite{schneider1990implementing} maintains a total order of requests across replicas, both in the crash failure model \cite{Li2007, Lampson2001, Lamport98Paxos, Lamport2005a, Lamport2005, Lamport01Paxos, Chandra2007, junqueira2011zab, van2014vive, oki1988viewstampeda, liskov2012viewstamped, ongaro2014search}) and in the Byzantine setting \cite{castro1999practical, martin2006fast, KotlaCACM,  gueta2018sbft,clement09aardvark,buchman2016tendermint, yin2019hotstuff, Clement09Upright, pires2018generalized, bessani2014state, lamport2011byzantizing, arun2019ezbft, malkhi2019flexible, duan2014hbft, yin2003separating, Guerraoui08Next, Kotla04High,  liskov2010viewstamped}, where they  have been used as a main building block of Blockchain systems \cite{androulaki2018hyperledger, Hyperledger, EthereumQuorum, buchman2016tendermint, al2017chainspace, kokoris2018omniledger,
gilad2017algorand, baudet2019state}. To maintain a total order abstraction, existing systems process  all operations sequentially (for both agreement and execution), thus limiting scalability for commutative workloads. 
They are, in addition, primarily leader-based which introduces additional scalability bottlenecks \cite{moraru2013there, zhang2015tapir, stathakopoulou2019mir} as well as fairness concerns. Rotating leaders~\cite{clement09aardvark, buchman2016tendermint,
yin2019hotstuff} reduce fairness concerns, and multiple-leader based systems \cite{moraru2013there, stathakopoulou2019mir, arun2019ezbft, li2016sarek} increase throughput. Recent work \cite{zhang2020byzantine, kelkar2020order, kursawe2020wendy, herlihy2016enhancing} discusses  how to improve fairness in BFT leader-based systems with supplementary ordering layers and censorship
resilience. \sys{} sidesteps these concerns by adopting a leaderless approach and addresses the effects of Byzantine actors beyond ordering through the stronger notion of Byzantine Independence. 

\par \textbf{Fine-grained ordering} Existing replicated systems in the crash-failure model leverage operation semantics to allow commutative operations to execute concurrently~\cite{moraru2013there, Lamport2005, sutra2011fast, li2012redblue, park2019exploiting, yan2018carousel, mu2016consolidating, mu2014extracting, zhang2015tapir, kraska2013mdcc}. This work is much rarer
in the BFT context, with Byblos \cite{bazzi2018clairvoyant} and Zzyzyx \cite{hendricks2010zzyzx} being the only BFT protocol that seek to leverage commutativity. However, unlike \sys{}, Byblos is limited to a static transaction model and introduces
blocking between transactions that are {\em potentially} concurrent with other conflicting transactions; while Zzyzyx resorts to a SMR substrate protocol under contention. Other existing Quorum-based systems naturally allow for non-conflicting operations to execute concurrently, but do not provide transactions~\cite{malkhi1998Byzantine, abd2005fault, cowling2006hq, liskov2006tolerating}.

\par \textbf{Sharding} Some Blockchains rely on sharding to parallelize independent transactions, but
continue to rely on a total-order primitive within shards~\cite{zamani2018rapidchain, al2017chainspace, kokoris2018omniledger}. As others in the crash-failure model have highlighted~\cite{zhang2016operation,zhang2015tapir,mu2016consolidating}, this approach incurs redundant coordination and fails to fully leverage the available parallelism within a workload. 
\par\textbf{DAGs} Other permissionless Blockchains use directed acyclic graphs
rather than chains~\cite{pervez2018comparative, popov2016tangle, rocket2018snowflake}, but require dependencies and conflicts to be known prior to execution.
\par \textbf{Byzantine Databases} \sys{} argues that BFT systems and Blockchains are in fact simply
databases and draws on prior work in BFT databases. HRDB \cite{vandiver2007tolerating} offers interactive transactions for a replicated database, but relies on a trusted coordination layer. Byzantium \cite{garcia2011efficient} designs a middleware system that utilizes PBFT as atomic broadcast (AB) and provides Snapshot Isolation using a primary backup validation scheme. Augustus \cite{padilha2013augustus} leverages sharding for scalability in the mini-transaction model \cite{aguilera2007sinfonia} and relies on AB to implement an optimistic locking based execution model. Callinicos \cite{padilha2016callinicos} extends Augustus to support armored-transactions in a multi-round AB protocol that re-orders conflicts for robustness against contention.
 BFT-DUR \cite{pedone2012Byzantine} builds interactive transactions atop AB, but does not allow for sharding. \changebars{}{; Our TxHotstuff baseline resembles the spirit of BFT-DUR while extending to sharding.}
 \sys{} instead supports general transactions and sharding without a leader or the redundant coordination introduced by atomic broadcast.
\par \textbf{Byzantine Clients} \sys{}, being client-driven, must  defend against
Byzantine clients. It draws from prior work targeted at reducing the severity and frequency of client misbehavior~\cite{liskov2006tolerating, garcia2011efficient, pedone2012Byzantine, padilha2013augustus, padilha2016callinicos, luiz2011Byzantine} and extends Liskov and Rodrigues' \cite{liskov2006tolerating}
definition of Byz-Linearizability to formalize the first safety and liveness properties for transactional BFT systems.

\section{Conclusion}
\label{section:conc}
\fs{can add back the two commented sentences we cut for space?}
This paper presents \sys, the first leaderless BFT transactional key-value store supporting ACID transactions.
\sys offers the \textit{abstraction} of a totally-ordered ledger while supporting highly concurrent transaction processing and ensuring \textit{Byz-serializability}. \sys clients make progress \textit{independently}, while \textit{Byzantine Independence} limits the influence of Byzantine participants. 
During fault and contention-free executions \sys commits transactions in a single round-trip.

\section*{Acknowledgments}
\changebars{We are grateful to Andreas Haeberlen (our shepherd), Malte Schwarzkopf, and the
anonymous SOSP reviewers for their thorough and insightful
comments. Thanks to Daniel Weber and Haotian Shen for their help in developing early
versions of our baseline systems. This work was supported in part by
the NSF grants CSR-17620155, CNS-CORE 2008667, CNS-CORE 2106954, and by a Facebook
Graduate fellowship.}{}
\bibliographystyle{abbrv}
\newpage
\bibliography{References/refs}
\onecolumn\twocolumn
\tr{
\newpage
\onecolumn\twocolumn
\appendix

\section{Correctness Sketch}
\label{sec:correctness}
We sketch the main theorems and lemmas necessary to show safety and liveness for \sys{}.
Full proofs can be found in Appendix~\ref{sec:proofs}
We proceed in two steps: we first prove safety without the fallback protocol. We then
extend our correctness argument to handle fallback cases. 

First, we prove that each replica generates a locally serializable schedule.

\par \textbf{Lemma} \ref{proof:mvtso}. {\em  On each correct replica, the set of transactions for which the MVTSO-Check returns Commit 
forms an acyclic serialization graph.}

We then show that decisions for transactions are unique.
\par \textbf{Lemma} \ref{proof:logged}. {\em There cannot exist both an \ccert and a \acert for a given transaction.}

Next, we define a notion of conflicting transactions: a transaction $T_i$ conflicts
with $T_j$ if adding $T_j$ to a history containing $T_i$ would cause the execution to violate
serializability.
\par \textbf{Lemma} \ref{proof:visible}. {\em  If $T_i$ has issued a \ccert and $T_j$ conflicts with $T_i$, then $T_j$ cannot issue a \ccert.}
    
Given these three lemmas, we prove that \sys satisfies Byz-Serializability.
\par \textbf{Theorem} \ref{proof:ser}. {\em \sys{} maintains \textit{Byzantine-serializability}.}

Finally, we show that \sys preserves Byzantine independence under non-adversarial network assumptions. 

\par \textbf{Theorem} \ref{proof:independence}. {\em \sys{} maintains Byzantine independence in the absence of network adversary.}

We now explicitly consider the fallback protocol.  We first show that certified decisions remain durable.
\par \textbf{Lemma} \ref{proof:durable}. {\em Fallback leaders cannot propose decisions that contradict existing slow path \ccert/\acert. }

Additionally, we show that any reconciled decision must have been proposed by a client.
\par \textbf{Lemma} \ref{proof:validity}. {\em Any decision proposed by a fallback leader was proposed by a client.}
 
Given these two Lemmas we conclude:
\par \textbf{Theorem} \ref{proof:safety}. {\em Invoking the fallback mechanism preserves Theorem~\ref{proof:ser} and Theorem~\ref{proof:independence}.}

Next we show that, given partial synchrony~\cite{dwork1988consensus} and the existence of a global stabilization time (GST), \sys's fallback mechanism guarantees progress for correct clients.
\par \textbf{Theorem} \ref{proof:bounds}. A correct client can reconcile correct replicas' views in at most two round-trips and one time-out.

We use this result to prove that:
\par \textbf{Lemma} \ref{proof:correct_leader}. {\em After GST, and in the presence of correct interested clients, a correct fallback leader is eventually elected.} 

Given this lemma, we show that \sys{} allows correct clients to complete their dependencies during synchronous periods. 
\par \textbf{Theorem} \ref{proof:live}. {\em  A correct client eventually succeeds in acquiring either a \ccert or \acert for any transaction of interest.}

\section{Proofs}
\label{sec:proofs}

\subsection{Byzantine-Serializability}

\begin{lemma}
On each correct replica, the set of transactions for which the MVTSO-Check returns Commit 
forms an acyclic serialization graph.
\label{proof:mvtso}
\end{lemma}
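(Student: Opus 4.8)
Write $T_a \prec T_b$ when $ts_{T_a} < ts_{T_b}$. Because each timestamp is a pair $(\mathit{Time}, \mathit{ClientID})$ with distinct client identifiers breaking ties, $\prec$ is a \emph{strict total order} on transactions. The plan is to build the serialization graph $\mathit{SG}_R$ whose vertices are exactly the transactions for which \textsc{MVTSO-Check} returned \textsc{Vote-Commit} at the correct replica $R$, whose edges are the conflict (and version-order) dependencies among them, and then to establish a single invariant: \emph{every edge $T_a \to T_b$ of $\mathit{SG}_R$ satisfies $T_a \prec T_b$}. Acyclicity is then immediate, since a cycle $T_1 \to T_2 \to \cdots \to T_1$ would yield $ts_{T_1} < ts_{T_2} < \cdots < ts_{T_1}$, contradicting that $\prec$ is a strict total order.

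\textbf{Easy edges.} First I would dispatch two of the three conflict types directly from the protocol. For a write--read (reads-from) edge $T_a \to T_b$, the read logic returns only versions whose timestamp is strictly smaller than the reader's $ts_{T_b}$, and the dependency-validity check (Lines 3--4 of Algorithm~\ref{a:MVTSO}) confirms that the version was genuinely produced by $T_a$; hence $ts_{T_a} < ts_{T_b}$, i.e.\ $T_a \prec T_b$. For write--write ordering, versions are \emph{tagged} with the writing transaction's timestamp, so the version order installed in the multiversion store coincides with $\prec$ by construction and those edges trivially respect $\prec$.

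\textbf{The hard case.} The crux is the read--write anti-dependency: $T_b$ writes a key $x$ that $T_a$ read at a strictly older version, forcing $T_a \to T_b$, and I must show $T_a \prec T_b$. Assume for contradiction that both voted Commit at $R$ while $T_b \prec T_a$, and let $v = \mathit{ReadSet}_{T_a}[x].\mathit{version}$, so that $v < ts_{T_b} < ts_{T_a}$. Both transactions reach the prepare step (Line 14) at $R$; I case on which does so first. If $T_b$ is prepared (or committed) at $R$ before \textsc{MVTSO-Check}$(T_a)$ runs, then $T_a$'s read-miss check (Lines 7--8) finds a writer $T_b \in \mathit{Committed}\cup\mathit{Prepared}$ with $v < ts_{T_b} < ts_{T_a}$ on a key of $\mathit{ReadSet}_{T_a}$ and returns \textsc{Vote-Abort}, so $T_a$ cannot have committed. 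Otherwise $T_a$ prepares before $T_b$, so $T_a \in \mathit{Committed}\cup\mathit{Prepared}$ when \textsc{MVTSO-Check}$(T_b)$ runs; then $T_b$'s write-miss check (Lines 9--11) finds $T_a$ with $\mathit{ReadSet}_{T_a}[x].\mathit{version} = v < ts_{T_b} < ts_{T_a}$ and returns \textsc{Vote-Abort}, so $T_b$ cannot have committed. Either ordering contradicts the hypothesis, yielding $T_a \prec T_b$ and completing the invariant.

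\textbf{Main obstacle.} I expect the anti-dependency case to be the only real difficulty, precisely because it is where \sys{}'s out-of-order, multi-replica reads break the clean single-copy MVTSO invariant: the version a client records in its $\mathit{ReadSet}$ need not reflect $R$'s latest local write, so one cannot simply appeal to ``the read saw the most recent write.'' The argument closes this gap by noting that the write-miss check operates on the reader's $\mathit{ReadSet}$, which $R$ learns from the Prepare message itself and thus \emph{independently} of whether $R$ ever served that read; the read timestamp installed at read time plays the complementary role, guarding against still-ongoing (not-yet-prepared) readers via the RTS check (Lines 12--13). A minor point to handle cleanly is that membership in $\mathit{SG}_R$ requires a transaction to have survived Lines 15--19, so any transaction supplying a reads-from edge has itself committed and the graph is well defined over genuinely committed transactions.
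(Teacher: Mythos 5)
Your proposal is correct and follows essentially the same route as the paper's own proof: both establish the invariant that every DSG edge respects timestamp order, dispatch the \textit{ww} and \textit{wr} edges directly from the protocol, and handle the \textit{rw} anti-dependency by contradiction with a case split on which transaction's MVTSO-Check ran first at the replica (your case 1 invoking Lines 7--8 and case 2 invoking Lines 9--11 mirror the paper's two subcases exactly), before concluding acyclicity from the total order on timestamps. The only cosmetic difference is that you argue the anti-dependency bound $v < ts_{T_b}$ directly from the version-order/timestamp correspondence, whereas the paper routes it through an auxiliary transaction $T_k$ and the previously proven edge cases; the substance is identical.
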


\begin{proof}
  We use Adya's formalism here~\cite{adya99weakconsis}.
 An execution of \sys{} produces a direct serialization graph (DSG) whose
  vertices are committed transactions, denoted $T_t$, where $t$ is the unique timestamp identifier.
  Edges in the DSG are one of three types:

    \begin{itemize}
    \item $T_i \xrightarrow{ww} T_j$ if $T_i$ writes the version of object $x$ that precedes $T_j$ in the version order.
    \item $T_i \xrightarrow{wr} T_j$ if $T_i$ writes the version of object $x$ that $T_j$ reads.
    \item $T_i \xrightarrow{rw} T_j$ if $T_i$ reads the version of object $x$ that precedes $T_j$'s write.
    \end{itemize}
  We assume, as does Adya, that if an edge exists between $T_i$ and $T_j$, then
  $T_i \neq T_j$. \fs{What exactly are "i" and "j"? The serialization order, or the timestamps/versions. In paragraph "Acyclicity" we say timestamps, so we should mention it here. Something like: "i and j are the unique timestamps of each transaction"}
  
  First, we prove that if there exists an edge $T_i\xrightarrow{rw/wr/ww} T_j$,
  then $i<j$. We consider each case individually.

  \paragraph{$\xrightarrow{ww}$ case.} Assume that there is a $T_i\xrightarrow{ww} T_j$
  edge. This means that $T_i$ writes a version of object $x$ that precedes $T_j$ in
  the version order. MVTSO's version order for each object is equivalent to the
  timestamp order of the transactions that write to the object. Note that the order
  in which versions are added to the list may not match the timestamp order, but
  the versions are added to the appropriate indices in the list such that the
  timestamps of all preceding versions are smaller and the timestamps of all
  subsequent versions are larger than the new version's timestamp. This implies
  that $i < j$. \fs{I find this sentence hard to read.}

  \paragraph{$\xrightarrow{wr}$ case.} Assume that there is a $T_i\xrightarrow{wr} T_j$
  edge. This means that $T_j$ reads a version of object $x$ written by $T_i$.
  MVTSO's algorithm returns, for a read in $T_j$, the latest version of an object whose
  version timestamp is lower than $j$. This implies that $i < j$.

  \paragraph{$\xrightarrow{rw}$ case.} This case is the most complex. We
  prove it by contradiction. Assume that there is a $T_i \xrightarrow{rw} T_j$ 
  edge such that this edge is the first edge where $i \geq j$.
  $T_i \xrightarrow{rw} T_j$ means that $T_i$
  reads the version of object $x$ that precedes $T_j$'s write. Let that version be $x_k$.
  By the assumption that $T_i \neq T_j$, $i > j$. By the definition of the
  direct serialization graph, $T_k \xrightarrow{wr} T_i$ and $T_k \xrightarrow{ww} T_j$.
  The previous cases imply that $k < i$ and $k < j$.

  Consider a replica that ran the MVTSO-Check for both $T_i$ and $T_j$. There are
  two subcases: either the check for $T_i$ was executed before the check for $T_j$
  or vice versa.
  \begin{enumerate}
    \item \textbf{$T_i$ before $T_j$.~~} 
      $T_i$ must have committed because it exists in the DSG. This implies that
      $T_i$ was in the $\mathit{Prepared}$ set when the check was executed for
      $T_j$ (Line 14 of Algorithm~\ref{a:MVTSO}). When the check for $T_j$ reached
      Line 10 of Algorithm~\ref{a:MVTSO} for $T_j$'s write of $x_j$, the condition was satisfied by
      $T_i$'s read of $x_k$ because $k < j$ and $j < i$. Therefore, the check
      for $T_j$ returned Vote-Abort. However, this is a contradiction because
      $T_j$ committed. \fs{maybe clarify that we are only considering a single replica: I.e. $T_j$ "committed" because this replica voted Vote-Commit. Maybe change committed to Vote-Commit}
    \item \textbf{$T_j$ before $T_i$.~~}
      $T_j$ must have committed because it exists in the DSG. This implies that
      $T_j$ was in the $\mathit{Prepared}$ set when the check was executed for
      $T_i$. When the check for $T_i$ reached Line 7 of Algorithm~\ref{a:MVTSO} for
      $T_i$'s read of $x_k$, the condition was satisfied by $T_j$'s write of $x_j$
      because $k < j$ and $j < i$. Therefore, the check for $T_i$ returned
      Vote-Abort. However, this is a contradiction because $T_i$ is committed.
      \fs{same here. Emphasize at the start of the whole section that we are considering a single replica = unreplicated scenario}
  \end{enumerate}
  
  In all cases, the preliminary assumption leads to a contradiction.
  This implies that $i < j$.

  Next, we use the fact that if there exists an edge
  $T_i\xrightarrow{rw/wr/ww} T_j$, then $i<j$ to prove that the set of
  transactions for which MVTSO-Check returns Commit is serializable.

  \paragraph{Acyclicity.} The set of transactions is serializable if the DSG has no cycles. Assume for
  a contradiction that there exists a cycle consisting of $n$ transactions $T_{\mathit{ts}_1}$,
  ..., $T_{\mathit{ts}_n}$. By the previous fact, this implies that $\mathit{ts}_1 < ... < \mathit{ts}_n < \mathit{ts}_1$.
  However, transaction timestamps are totally ordered. This is a contradiction.
  Thus, the DSG has no cycles.

\end{proof}

\begin{lemma}
 There cannot exist both an \ccert and a \acert for a given transaction.
\label{proof:logged}
\end{lemma}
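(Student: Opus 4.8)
The plan is to fix a transaction $T$, assume for contradiction that a valid \ccert and a valid \acert both exist for $T$, and derive a contradiction by quorum intersection inside a single shard, leveraging that $n=5f+1$, that at most $f$ replicas per shard are Byzantine (so at least $4f+1$ are correct), and that a correct replica casts at most one \poner vote and logs at most one \ptwor decision per transaction. I would work in the fallback-free setting, consistent with the placement of this lemma before the fallback protocol; durability across fallback views is handled separately by Lemma~\ref{proof:durable}. The argument splits on how each certificate is assembled: a \ccert requires a commit \dcert from \emph{every} shard $T$ touches (fast path: $5f+1$ matching commit \poner per shard; slow path: a logged $\dcert_{S_{\mathit log}}$ backed by $n-f=4f+1$ matching \ptwor whose underlying vote tallies exhibit a $3f+1$ commit quorum for every shard), whereas an \acert requires an abort \dcert from \emph{some} shard (fast path: either $3f+1$ abort \poner, or a single abort vote accompanied by a valid \ccert for a conflicting $T'$; slow path: a logged $\dcert_{S_{\mathit log}}$ backed by $4f+1$ matching abort \ptwor).

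First I would dispatch the cases where the abort \dcert rests on replica abort votes or a logged abort, i.e.\ not on a conflicting \ccert. If the \acert uses a fast-path $3f+1$ abort quorum at some shard $S'$, note that the \ccert certifies at least $3f+1$ commit votes at every shard, including $S'$; each such $3f+1$-quorum contains at least $2f+1$ correct replicas, and since $S'$ has only $4f+1$ correct replicas, $(2f+1)+(2f+1)-(4f+1)=1$ forces a correct replica to have cast both a commit and an abort \poner for $T$, contradicting that it stores a single vote. If instead the \acert is logged at $S_{\mathit log}$ via $4f+1$ matching abort \ptwor, I split on the commit path. When commit is also logged (slow path), both certificates live on the same deterministically chosen $S_{\mathit log}$, and their $4f+1$-sized \ptwor sets overlap in $(4f+1)+(4f+1)-(5f+1)=3f+1>f$ replicas, forcing a correct replica to have logged two contradictory decisions, which is impossible without fallback. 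When commit is on the fast path, every shard exhibits $5f+1$ commit votes, so all $4f+1$ correct replicas voted commit and at most $f$ abort votes exist per shard; then no shard can present the $f+1$ abort tally that a correct logging replica requires to log an abort, so fewer than $4f+1$ abort \ptwor can ever be gathered.

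The remaining, and subtlest, case is the abort \dcert justified by a single abort vote together with a valid \ccert for a conflicting $T'$. Here there is no large abort quorum to intersect, so I would instead invoke Lemma~\ref{proof:mvtso}. Since $T'$ carries a \ccert it is committed, and $T$ is assumed to carry a \ccert so it is committed too; both commit certificates guarantee at least a $3f+1$ commit quorum, hence at least $2f+1$ correct commit voters, in every shard they touch. Because $T$ and $T'$ conflict, they access a common object on a shard $S^{*}$ that both touch; the two sets of $\ge 2f+1$ correct commit voters at $S^{*}$ overlap in at least one correct replica $R$ (again via $(2f+1)+(2f+1)-(4f+1)=1$). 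Thus $R$ returned Vote-Commit for both $T$ and $T'$, so both lie in $R$'s committed set. By the definition of conflict (adding one to a history containing the other violates serializability, i.e.\ closes a cycle), $R$'s direct serialization graph then contains a cycle, contradicting Lemma~\ref{proof:mvtso}. This self-contained treatment lets me avoid depending on Lemma~\ref{proof:visible}, which is proved later.

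I expect the conflicting-\ccert case to be the main obstacle: unlike the others it cannot be closed by counting votes in the shard that certifies abort, and it requires tying the verifiable commit certificate of $T'$ back to the per-replica serializability guarantee of Lemma~\ref{proof:mvtso} through a shared-shard overlap argument. A secondary point needing care is the slow-path reasoning, where the contradiction comes not from two colliding vote quorums but from the fact that a correct logging replica accepts only decisions justified by valid vote tallies, so I must confirm that a commit fast path structurally precludes any shard from ever exhibiting the $f+1$ abort tally that an abort log would require.
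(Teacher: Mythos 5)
Your proof is correct and follows essentially the same route as the paper's: both enumerate the fast/slow assembly paths for \ccert and \acert, dispose of each combination by quorum intersection plus the fact that a correct replica never equivocates its \poner vote or its logged \ptwor decision, and handle the conflicting-\ccert abort case via Lemma~\ref{proof:mvtso}. Your only deviation is cosmetic: where the paper argues that the $2f+1$ correct committers of the conflicting $T'$ would, by Lemma~\ref{proof:mvtso}, have to vote abort on $T$ and then intersects that set with $T$'s commit quorum, you intersect the two commit quorums directly and derive a cycle at a common correct replica --- the same argument read in the opposite direction.
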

\begin{proof}
There are two ways that the client can form a \ccert/\acert: through the single phase fast-path, and through the two-phase slow path. We consider both in turn. 
Recall, that on the fast path a \ccert contains a list of \poner as evidence for each shard, while an \acert contains a list of \poner for a single shard (that voted to abort). On the slow path instead, both a \ccert and an \acert contain only a list of \ptwor only for the logging shard $S_{log}$. The intuition for this difference is that on the fast path the two-phase commit decision is yet to be validated, while on the slow path, the two-phase commit decision was already taken and logged in an effort to reduce redundant logging.

In the following, we show that for all possible combinations of certificates, no \ccert and \acert can co-exist.

\par \textbf{Commit Fast Path, Abort Fast Path.} Assume that a client generates both a \ccert and an \acert for $T$ and that both went fast path. A fast \ccert requires $5f+1$ replicas (commit fast path quorum) to vote commit $T$ on \textit{every} shard, while a fast \acert requires either $1$ vote if a \ccert is present to prove the conflict (abort fast path quorum, case 1), or $3f+1$ votes on a \textit{single} shard otherwise (abort fast path quorum, case 2). 
We distinguish abort fast path quorum cases 1 and 2:
\begin{enumerate}
\item If a \ccert exist for a conflicting transaction, then by definition of a \ccert, at least $3f+1$ (commit slow path quorum) replicas of the shard in question must have voted to commit a transaction $T'$ that conflicts with $T$. Since correct replicas never change their vote, by Lemma \ref{proof:mvtso}, at least $2f+1$ correct replicas must vote to abort $T$. Yet, this scenario would require that on this shard, at least one correct replica equivocated ($2f+1$ correct and $3f+1$ must overlap in at least one correct replica). We have a contradiction.
\item Correct replicas never change their vote, yet this scenario would require that on one shard, at least one correct replica equivocated ($5f+1$ and $3f+1$ must overlap in at least one correct replica). We have a contradiction.
\end{enumerate}

\par \textbf{Abort Fast Path, Commit Slow Path.} 
Assume that a client generates both a \ccert that went slow path and an \acert that went fast path for $T$. A slow \ccert requires $n-f = 4f+1$ matching \ptwor replies from the logging shard. In order for a correct replica to send a commit \ptwor message, it must have received a vote tally of at least $3f+1$ commit votes created on \textit{every} shard.
Instead, a fast \acert requires either $1$ vote if a \ccert is present to prove the conflict (abort fast path quorum, case 1), or $3f+1$ votes on a \textit{single} shard otherwise (abort fast path quorum, case 2). 
We distinguish abort fast path quorum cases 1 and 2:
\begin{enumerate}
\item If a \ccert exist for a conflicting transaction, then by definition of a \ccert, at least $3f+1$ (commit slow path quorum) replicas of the shard in question must have voted to commit a transaction $T'$ that conflicts with $T$. Since correct replicas never change their vote, by Lemma \ref{proof:mvtso}, at least $2f+1$ correct replicas must vote to abort $T$. Yet, this scenario would require that on this shard, at least one correct replica equivocated ($2f+1$ correct and $3f+1$ must overlap in at least one correct replica). We have a contradiction.
\item Correct replicas never change their vote, yet this scenario would require that on one shard, at least one correct replica equivocated ($3f+1$ and $3f+1$ must overlap in at least one correct replica). We have a contradiction.
\end{enumerate}

\par \textbf{Commit Fast-Path, Abort Slow-Path.}  Assume that a client generates a \ccert that went fast path, and a \acert that went slow path for a transaction $T$. A fast \ccert requires $5f+1$ replicas (commit fast path quorum) to vote to commit $T$ on \textit{every shard}, while a slow \acert requires $n-f = 4f+1$ \ptwor messages from the logging shard with the decision to abort. In order for a correct replica to send an abort \ptwor message, it must have received a vote tally of at least $f+1$ abort votes created on a single shard.
Correct replicas never change their vote, yet this scenario would require that on one shard, at least one correct replica equivocated ($5f+1$ and $f+1$ must overlap in at least one correct replica). We have a contradiction.

\par \textbf{Commit Slow-Path, Abort Slow-Path.} Assume that a client generates a \ccert and an \acert, both of which went slow path. Recall that each slow path certificate requires $n-f = 4f+1$ matching \ptwor replies from the logging shard. As correct replicas never
change their decision, assembling two sets of $n-f$ matching \ptwor replies would require correct replicas to equivocate. We have a contradiction.

In all cases, we show that there cannot be both a \ccert and \acert for a given transaction $T$. 

\end{proof}

From Lemma \ref{proof:logged} it follows that no two correct replicas can ever process a different outcome (commit/abort) for a transaction $T$. Thus, given a fixed set of total transactions, all replicas are eventually consistent.\\

\begin{lemma}
 If $T_i$ has issued a \ccert and $T_j$ conflicts with $T_i$, then $T_j$ cannot issue a \ccert.
\label{proof:visible}
\end{lemma}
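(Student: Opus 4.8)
The plan is to argue by contradiction, combining a quorum-intersection argument with Lemma~\ref{proof:mvtso}. Suppose both $T_i$ and $T_j$ issue a \ccert, even though $T_j$ conflicts with $T_i$. Since conflicts are detected locally at the data-item granularity, the conflict must manifest on some shard $S$ that both transactions access (they share a conflicting object residing on $S$). I would first fix such an $S$ explicitly, so that the remaining argument can focus on the replicas of a single shard.

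Next I would establish a uniform lower bound on $S$: regardless of whether a \ccert was formed on the fast or the slow path, at least $3f+1$ of $S$'s replicas must have cast a Vote-Commit \poner for the transaction in question. On the fast path this is immediate, since a fast \ccert requires $5f+1 \ge 3f+1$ Commit votes on \emph{every} shard, including $S$. On the slow path, a \ccert consists of $n-f = 4f+1$ matching commit \ptwor from $S_{\mathit{log}}$, and a correct replica only emits a commit \ptwor after validating a vote tally that exhibits a $CQ$ of $3f+1$ Commit votes on \emph{every} shard, again including $S$. The same bound applies to $T_j$. This path-by-path bookkeeping, mirroring the case analysis of Lemma~\ref{proof:logged}, is the step I expect to require the most care, since it is where the fast/slow-path asymmetry must be reconciled into a single clean invariant.

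With two Commit quorums on $S$, each of size at least $3f+1$ out of $n=5f+1$, I would apply quorum intersection: $|CQ_i \cap CQ_j| \ge (3f+1)+(3f+1)-(5f+1) = f+1$. Because at most $f$ replicas of $S$ are Byzantine, at least one replica $R$ in the intersection is correct. Since correct replicas run MVTSO-Check at most once per transaction and store their vote, $R$'s MVTSO-Check returned Commit for both $T_i$ and $T_j$.

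Finally I would invoke Lemma~\ref{proof:mvtso}: on the correct replica $R$, the set of transactions whose MVTSO-Check returned Commit forms an acyclic, serializable graph, and therefore contains no pair of mutually conflicting transactions. But $T_i$ and $T_j$ both belong to this set while conflicting by hypothesis---adding one to a schedule already containing the other violates serializability---which is the desired contradiction. Hence $T_j$ cannot issue a \ccert.
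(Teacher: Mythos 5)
Your proof is correct and follows essentially the same route as the paper's: reduce to a common shard, show that each \ccert (fast or slow path) implies at least $3f+1$ Commit votes on that shard, intersect the two quorums to find a correct replica that voted Commit for both transactions, and derive a contradiction with Lemma~\ref{proof:mvtso}. The only difference is presentational---the paper handles $T_j$'s certificate by explicit fast-path/slow-path case analysis and counts correct replicas directly, whereas you fold both paths into a single uniform $3f+1$ bound and intersect raw quorums before discounting the $f$ Byzantine replicas, which is equivalent (and arguably cleaner) bookkeeping.
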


A transaction $T_j$ conflicts with $T_i$ if adding $T_j$ to the serialization graph would create a cycle. If $T_i$ is in $Commit \cup Prepared$ at a replica, the MVTSO concurrency control check
will return abort (Lemma~\ref{proof:mvtso}).
\begin{proof}

    As $T_i$ has committed, a client has generated a \ccert for $T_i$.

    If a client generates a \ccert for $T$, then \textit{at every involved shard} at least $3f+1$ replicas voted to commit the transaction (the slow path requires 3f+1 votes, the fast path 5f+1), and consequently at least $2f+1$ correct replicas voted to commit.

    Assume by way of contradiction that a client $C$ has created a \ccert for $T_j$. There are two ways of achieving this: either $C$ generates
    a \ccert through the fast path or it does so through the slow path. Note, that since $T_i$'s and $T_j$' conflict, their involved shards must intersect in one common shard $S_{common}$.

    \par \textbf{Fast Path.} If this \ccert was created through the fast path, then there must exist at least $4f+1$ correct replicas on $S_{common}$ that voted to commit $T_j$.  But, at least $2f+1$ correct replicas have already voted to commit $T_i$, and would thus vote to abort $T_j$. We have a contradiction. $C$ cannot generate a \ccert for $T_j$ through the fast path.

    \par \textbf{Slow Path.} If this \ccert was created through the slow path, then there must exist at least $2f+1$ correct replicas  on $S_{common}$ (out of $3f+1$) that voted to commit $T_j$ (\poner). But, at least $2f+1$ correct replicas had already voted to commit $T_i$ (and thus should vote to abort $T_j$). As there are only a total of $4f+1$ correct replica, this means that one correct replica voted to commit $T_j$ despite it creating a cycle in the serialization graph with $T_i$ (violating Lemma~\ref{proof:mvtso}). We have a contradiction. $C$ cannot generate a \ccert for $T_j$ through the slow path.

\end{proof}

\begin{theorem} 
\sys{} maintains \textit{Byzantine-serializability}.
\label{proof:ser}
\end{theorem}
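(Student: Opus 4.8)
The plan is to exhibit, for the history $H$ that \sys{} actually produces, a legitimate history $H'$ that is serializable and correct-view equivalent to $H$; by the definition of Byz-$I$ this establishes Byz-serializability. The object that mediates between the two is the set $\mathcal{C}$ of \emph{committed} transactions, i.e. those for which a valid \ccert exists. Lemma~\ref{proof:logged} guarantees that no transaction carries both a \ccert and an \acert, so every transaction has a single well-defined outcome and all correct clients (and replicas) agree on it; hence $\mathcal{C}$ is unambiguous and the commit/abort decisions recorded in $H|_{Crct}$ are consistent.

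The heart of the argument is to show that the global serialization graph over $\mathcal{C}$ is acyclic. I would argue this inductively, processing the committed transactions in the real-time order in which their \ccert{}s are first formed, and maintaining the invariant that the graph induced by the transactions committed so far is acyclic. The base case is the empty graph. For the inductive step, suppose $T_j$ is the next transaction to obtain a \ccert. The contrapositive of Lemma~\ref{proof:visible} states that a committed $T_j$ cannot conflict with an already-committed transaction, where ``conflict'' is exactly the property that inserting $T_j$ into the current (committed) serialization graph closes a cycle. Thus adding $T_j$ leaves the graph acyclic, and the invariant is preserved. Taking the limit over all of $\mathcal{C}$ shows the committed serialization graph has no cycles, which by the classical serializability theorem means the committed transactions admit a serial order.

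From an acyclic graph I would take any topological sort to define the serial schedule underlying $H'$, placing every transaction of $\mathcal{C}$ in that order. To see that $H'$ is \emph{legitimate}, note that the existence of a valid \ccert certifies that a transaction passed every replica-side check a correct client would have to satisfy (valid timestamp, valid dependencies, and a quorum of Vote-Commit replies), so even a transaction issued by a Byzantine client is, as far as its externally observable effects go, indistinguishable from one issued by a correct client; we may therefore relabel all committed transactions as correct, yielding $H' = H'_{Crct}$. Correct-view equivalence then follows because the only states a correct client ever observes are the results of committed operations --- reads return the highest-timestamped \emph{valid} version (a committed version with a \ccert, or a prepared version later certified), commit decisions are the outcomes in $\mathcal{C}$, and final database values are the writes of committed transactions --- and Lemma~\ref{proof:logged} ensures these outcomes match across $H$ and $H'$. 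Per-replica local serializability (Lemma~\ref{proof:mvtso}) underwrites that the read values recorded by committed transactions are themselves consistent with the induced order.

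The main obstacle is the global, cross-shard step: Lemma~\ref{proof:mvtso} reasons about a single replica and Lemma~\ref{proof:visible} about a single pair of transactions sharing a common shard, whereas a cycle in the \emph{global} serialization graph may thread through many shards. The care is in showing that any such global cycle, when attributed to the last of its transactions to commit, reduces precisely to the pairwise, shared-shard conflict that Lemma~\ref{proof:visible} rules out, so that the inductive invariant is legitimately maintained. A secondary subtlety, deferred to Theorem~\ref{proof:safety}, is that this argument is carried out for the fault-free (no-fallback) protocol; extending it requires that fallback never manufactures a contradictory outcome, which the later lemmas supply.
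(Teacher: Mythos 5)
Your proposal is correct and takes essentially the same route as the paper: the paper's own proof likewise considers the set of transactions holding a commit certificate, invokes Lemma~\ref{proof:logged} to rule out a coexisting abort certificate and Lemma~\ref{proof:visible} to rule out any committed conflicting transaction, and concludes that the serialization graph over committed transactions is acyclic, hence Byz-serializable. Your write-up is substantially more explicit than the paper's four-sentence argument --- the induction over commit order, the topological sort defining $H'$, the legitimacy and correct-view-equivalence checks, and the multi-shard cycle subtlety are all left implicit there --- but the underlying structure is identical.
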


\begin{proof}
    Consider the set of transactions for which a \ccert could have been assigned. Consider a
    transaction $T$ in this set. By Lemma~\ref{proof:logged}, there cannot exist an \acert for this transaction. By Lemma~\ref{proof:visible}, there cannot exist a conflicting transaction $T'$ that generated a \ccert. Consequently, there cannot exist a committed transaction $T'$ in the history.  The history thus generates an acyclic serialization graph. The system is thus Byz-Serializable.
\end{proof}

\subsection{Byzantine Independence}

\begin{theorem}
\sys{} maintains Byzantine independence in the absence of network adversary.
\label{proof:independence}
\end{theorem}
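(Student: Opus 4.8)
The plan is to decompose the \emph{result} of any operation $o$ issued by a correct client into the primitive outcomes that \sys{} actually exposes to that client---the value returned by a \textbf{Read}, and the commit/abort decision produced by the \textbf{Prepare} phase---and then to show, for each, that its outcome is anchored to at least one correct participant, so that no coalition consisting solely of Byzantine actors (at most $f$ replicas per shard, together with arbitrarily many Byzantine clients) can fix it unilaterally. The non-adversarial-network hypothesis enters in exactly two places: it guarantees that a correct client can reach enough correct replicas to assemble the quorums used below, and that the fallback of Theorem~\ref{proof:live} eventually terminates. I would proceed by cases over the kinds of operation, leaning on Lemma~\ref{proof:mvtso} for what a correct replica's vote means and on the quorum arithmetic of $n = 5f+1$.

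For \textbf{reads}, recall that the client waits for $f+1$ replies and selects the highest-timestamped \emph{valid} version. A committed version is valid only if accompanied by a commit certificate \ccert, which a solely-Byzantine coalition cannot forge under the standard cryptographic assumption; a prepared version is valid only if returned by $f+1$ replicas, and since at most $f$ are Byzantine, at least one correct replica must have vouched for it, so the version was genuinely prepared. Conversely, because the client takes the highest valid timestamp among replies of which at least one is correct, Byzantine replicas cannot force it to accept a value staler than what a single correct replica would return. Hence the read result is pinned by correct replicas: a Byzantine coalition can neither inject a fabricated value nor impose a stale one.

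For the \textbf{abort} outcome, a correct client counts a shard's vote as Abort only upon an AbortQuorum of $f+1$ matching abort votes (or a conflicting \ccert). Since $f+1 > f$, at least one of those votes originates from a correct replica, and by the MVTSO-Check (Lemma~\ref{proof:mvtso}) a correct replica votes to abort only when $T$ genuinely conflicts with a committed or prepared transaction. Therefore a coalition of only Byzantine replicas cannot manufacture an abort of a correct client's transaction from nothing.

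The \textbf{commit outcome and dependency stalls} are the main obstacle, because a correct transaction can \emph{legitimately} conflict with, or depend on, a transaction issued by a Byzantine client; the adversary could try to (i) inflate timestamps so that correct, lower-timestamped transactions are doomed, or (ii) prepare a conflicting transaction and then refuse to decide it, stalling every dependent transaction. I would dispatch (i) with the admission rule $ts_T \le R_{\mathit{Time}} + \delta$, which bounds how far ahead any client---Byzantine included---can place a timestamp, so a correct client can always pick a timestamp no correct replica rejects and conflicts cannot be summoned merely by large timestamps. For (ii), I would invoke the fallback invariant: any dependency a correct client acquires, or any conflicting transaction that aborts it, is vouched for by $f+1$ replicas and is therefore recoverable, and under partial synchrony after GST the fallback (Theorem~\ref{proof:live}, via the reconciliation bounds of Theorem~\ref{proof:bounds} and the correct-leader guarantee of Lemma~\ref{proof:correct_leader}) lets a correct client drive the stalled transaction to a \ccert or \acert without Byzantine consent---so a stall cannot be converted into a permanent, unilaterally-imposed abort. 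Combining the three cases, the outcome of every $o$ is pinned by at least one correct participant---a correct replica's read reply or abort vote, or the eventual correct-driven resolution of a dependency---so no solely-Byzantine coalition dictates it, which establishes Byzantine independence whenever the network is non-adversarial. The crux is precisely this last case, where the timestamp bound (to block manufactured conflicts) and the fallback liveness guarantee (to block indefinite stalling) are \emph{jointly} required.
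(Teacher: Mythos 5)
Your proposal is correct in substance, and its two load-bearing quorum arguments---read validity (a \ccert{} required for committed versions, $f+1$ vouchers for prepared ones, highest-valid-timestamp selection) and the $f+1$ abort-quorum intersection with at least one correct replica---are exactly the paper's. Where you genuinely diverge is in scope and decomposition. The paper's proof of Theorem~\ref{proof:independence} is deliberately minimal and confined to the normal-case protocol: it observes that \one{} a Byzantine client can never fabricate an outcome on its own, because any decision requires replica-supplied evidence (\vtaly{}s and \dcert{}s); \two{} as a consequence, a prepared-but-stalled Byzantine transaction cannot unilaterally abort its dependents; \three{} every commit or abort quorum, fast or slow, contains at least one correct replica; and \four{} reads can be neither imaginary nor stale. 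It never touches timestamps, GST, or the fallback. The two attacks you treat as ``the crux''---timestamp inflation and indefinite stalling---are handled by the paper elsewhere: preservation of independence under the fallback is a separate theorem (Theorem~\ref{proof:safety}), and eventual termination of stalled transactions is yet another (Theorem~\ref{proof:live}). Your organization buys a more self-contained, end-to-end account of why a correct client's transaction cannot be doomed by Byzantine actors; but it comes at a cost: it imports partial synchrony (GST) into a theorem whose sole stated hypothesis is a non-adversarial network, and it leans on Theorem~\ref{proof:live}, whose proof in the paper in turn invokes Theorem~\ref{proof:independence}, so your dependency order would need restructuring to avoid apparent circularity. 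The paper's leaner reading---``dictate the result'' as a quorum-intersection property rather than a liveness guarantee, with liveness factored out into the fallback theorems---avoids both issues.
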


We show, that once a client submits a transaction for validation, the transaction's result cannot be unilaterally decided by any group of (colluding) Byzantine participant, be it client or replica.
\begin{proof}
First, we observe that a Byzantine client may never independently choose the result of a transaction. It requires evidence (through \vtaly and \dcert's) supplied by replicas. 

Second, and as a direct consequence, once a client
has prepared a transaction, it cannot unilaterally cause dependent transactions to abort.

Third, both slow abort and slow commit quorums contain at least one correct replica. Fast-path commits/aborts also necessarily contain at least one correct replica. Consequently, if a transaction
commits (or aborts), at least one correct node voted to commit (or abort). A set of Byzantine replicas cannot, on their own, decide the outcome of a transaction.

Finally, a set of Byzantine replicas cannot force a client to read an 1) imaginary read
2) a stale read. Specifically, a client reads committed writes only if there are associated
with a valid \ccert. Moreover, clients read from the $f+1$th latest read version returned. This
ensures that the returned version is at least as recent as what could have been written by an
correct node. Likewise, a client only reads uncommitted writes if they were returned by $f+1$ replicas.
\end{proof}

We note that this is not sufficient to guarantee Byzantine independence when an adversary controls the network. A network adversary, for instance, could systematically inject and reorder transactions at all replicas such that
desired transactions abort.

\subsection{Fallback Safety}

In the absence of an elected leader (\textit{Common Case} of the Fallback protocol), clients use \pone or \ptwo messages and follow the same rules as during normal execution. Byzantine clients that equivocate during \ptwo are indistinguishable from concurrently active clients. As such, all theorems introduced so far in \ref{sec:proofs} continue to hold.

We further note that a fallback leader is only elected when a client detects inconsistent \ptwor messages (the \textit{Divergent Case} of the Fallback protocol), i.e., when a transaction decision was logged on the slow path. Thus we only need to consider the effects of the fallback protocol on decisions that were generated through the slow path. 
Moreover, slow-path \ccert and \acert consist exclusively of messages sent by replicas on the logging shard, and fallback replica leaders only operate on the logging shard). We thus consider only the logging shard in the rest of our proofs.  We show that \one if a slow path \ccert or \acert already exists, the fallback protocol never produces
a conflicting decision certificate, and that \two if no decision certificate exists, any decision \ccert/\acert created
must correspond to a decision made by some client (i.e. the decision is based on a set of \vtaly).

For convenience, we re-state the decision reconciliation rule used by the fallback: the fallback replica proposes a decision \textit{dec$_{new}$ $=$ maj(\{\fbl.decision\})}. We note that matching views are not required here.
 
We start by showing: 

\begin{lemma}
Fallback leaders cannot propose decisions that contradict existing slow path \ccert/\acert. 
\label{proof:durable}
\end{lemma}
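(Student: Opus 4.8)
The plan is to argue by quorum intersection, reinforced by an induction over fallback view numbers. Assume a slow-path decision certificate --- either a \ccert or an \acert --- already exists for transaction $T$, carrying decision $d$; the two cases are symmetric, so I treat $d$ uniformly. By construction such a certificate is assembled from $n-f=4f+1$ matching \ptwor replies drawn from the logging shard, so $4f+1$ replicas have logged $d$. Since at most $f$ of these are Byzantine, at least $3f+1$ of them are correct. The goal is to show that whenever a fallback leader computes $dec_{new}=\mathrm{maj}(\{decision\})$ over the $4f+1$ \fbl messages it collects, the result is $d$, so the proposed decision cannot contradict the existing certificate.

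The core counting step is the following. A fallback leader is elected only after gathering $4f+1$ \fbl messages from distinct replicas of $S_{log}$. The set of these $4f+1$ senders and the set of $3f+1$ correct replicas that logged $d$ both live in the universe of $5f+1$ replicas, so they must overlap in at least $(4f+1)+(3f+1)-(5f+1)=2f+1$ replicas, all of which are correct loggers of $d$. Provided these replicas still report $d$, decision $d$ therefore appears in at least $2f+1$ of the leader's $4f+1$ messages, while at most $2f$ messages can carry anything else; since $2f+1>2f$, $d$ is the strict majority and $dec_{new}=d$.

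The main obstacle is justifying the proviso: that the $2f+1$ correct loggers still report $d$ at tally time, i.e. that no earlier fallback has flipped them to the opposite decision $\bar d$. I would close this by strong induction on the view number $v_{elect}$ of the proposal under consideration. A correct replica revises its logged decision only upon adopting a valid \fbd, and the induction hypothesis gives that every \fbd issued at a view below $v_{elect}$ already carried $d$; hence no correct replica that logged $d$ can have switched to $\bar d$, and the counting step of the previous paragraph applies verbatim to show that the leader at view $v_{elect}$ again proposes $d$. The base case is the first (lowest-view) fallback, where no \fbd has yet been produced, so the loggers of $d$ trivially still hold it. Together with the observation that \sys{} fallback leaders only ever operate on the logging shard, this establishes that no fallback leader can propose a decision contradicting an existing slow-path \ccert or \acert.
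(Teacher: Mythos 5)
Your core counting step is sound and is essentially the paper's: an existing slow-path certificate implies at least $3f+1$ correct replicas of $S_{log}$ logged $d$, an elected fallback leader hears from $4f+1$ replicas, and the overlap of $(4f+1)+(3f+1)-(5f+1)=2f+1$ correct loggers beats the at most $2f$ remaining messages, so the majority rule returns $d$. The genuine gap is in how you discharge the proviso that those loggers still hold $d$. Your strong induction is anchored at the \emph{first fallback ever invoked}: the base case asserts that at the lowest-view fallback no \fbd exists yet, and the hypothesis asserts that \emph{every} \fbd issued at a view below $v_{elect}$ carried $d$. This implicitly assumes the certificate predates all fallback activity, i.e., that it was logged with decision view $0$. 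But slow-path certificates can themselves be produced by the fallback mechanism: if the certificate for $d$ was assembled from \ptwor messages with decision view $v_{cert}\geq 1$, then \fbd messages at views $\leq v_{cert}$ may well have carried $\bar d$ --- for instance, a Byzantine fallback leader at view $1$ can equivocate and send different decisions to different replicas (the paper itself notes this is possible in the aside to its validity lemma), or a correct leader at view $1$ can propose $\bar d$ from a divergent majority, with convergence to $d$ occurring only at view $2$. In such executions your base case is vacuous (the eventual loggers of $d$ have logged nothing when the first fallback occurs) and your induction hypothesis is false, so the induction never gets started.

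The repair is to anchor the induction at the certificate's decision view rather than at the first fallback, which is what the paper does. A correct replica never logs a decision for a view smaller than its current view; hence the $3f+1$ correct replicas whose \ptwor messages (decision view $v_{cert}$) form the certificate must have logged $d$ \emph{before} adopting any view $v' > v_{cert}$, and in particular before sending \fbl for such a view. Now consider the \emph{first} fallback leader elected at a view $v' > v_{cert}$: no \fbd with view strictly between $v_{cert}$ and $v'$ exists, so the $2f+1$ correct loggers in its quorum still report $d$ and your counting applies verbatim; then induct only over the views above $v_{cert}$, exactly as you wrote. Proposals at views at or below $v_{cert}$ predate the certificate's existence and are outside the scope of the lemma, so they need not --- and cannot --- be swept into the induction.
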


\begin{proof}
First, we note that, if a correct replica sends an \fbl message for view $v$, it has necessarily adopted a view $v'\geq v$ and thus will never accept
a decision for a smaller view (1). 

For any existing slow path \ccert/\acert, the associated decision must have been adopted and logged by at least $3f+1$ correct replicas (through
\ptwo and \ptwor messages) in a matching decision view $v$. Without loss of generality, let the decision be Commit (the reasoning for Abort is
identical) and the corresponding decision certificate be a \ccert. Let $R_{FB}$ be the \textbf{first} fallback leader to be elected for a view $v'>v$. 
At least $3f+1$ correct replicas must be in $v'$, since $4f+1$ \fbl messages are required for a leader to be elected in view $v'$ (2). 

We note first that any existing \ccert with view $v \le v'$ must have been constructed from \ptwor messages sent by correct replicas \textbf{before} moving to view $v'$. 
This follows directly from (1): correct replicas do not accept a decision for a smaller view. As $4f+1$ matching replies are necessary to form the \ccert, it
follows that at least $3f+1$ correct nodes have responded Commit in the \ptwor message (3). 

Correct replicas never change their vote. By (1) we have that $3f+1$ correct replicas must be in view $v'$ and by (2) and (3) that $3f+1$ correct replicas voted Commit in
view $v < v'$. It follows that at least $2f+1$ correct replicas provided a Commit decision to the fallback leader $R_{FB}$ in view $v'$. 

Given the decision reconciliation rule, and the fact that the fallback leader is guaranteed to receive at least $2f+1$ Commit decisions, the only new decision that $R_{FB}$ in view $v'$ may propose is therefore Commit.

By induction, this holds for all consecutive views and fallback leaders: if there ever existed $3f+1$ correct replicas that logged decision $dec_v = d$ in view $v$, then those same replicas will only ever log $dec_{v'} = d$ for views $v' > v$, since a fallback leader will always see a majority of decision $d$.

Consequently \sys{} fulfills Lemma \ref{proof:durable}.
\end{proof}

Next, we show:

\begin{lemma}
Any decision proposed by a fallback leader was proposed by a client.
\label{proof:validity}
\end{lemma}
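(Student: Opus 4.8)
Any decision proposed by a fallback leader was proposed by a client.

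The goal is to show that a fallback leader cannot conjure a decision out of thin air: whatever value $dec_{new}$ it broadcasts in an \fbd message must trace back to some client that originally ran Stage 1 of the Prepare phase and assembled the corresponding vote tallies. I would prove this by tracking the provenance of decision values backwards through the fallback machinery, mirroring the structure of the durability argument in Lemma~\ref{proof:durable}.

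My plan is to proceed by induction on the view number $v$ in which a decision first appears at a correct replica. First I would establish the base case ($v=0$): a correct replica only adopts a decision in $view_{decision}=0$ by processing a \ptwo message, and the replica in Step~6 of the Prepare phase explicitly validates that the decision is justified by the accompanying vote tallies $\{\vtaly\}$ before logging it. Since vote tallies can only be assembled by a client collecting \poner replies (a replica never fabricates them on its own), any $view=0$ decision held by a correct replica was necessarily proposed by some client. For the inductive step, I would consider the first fallback leader $R_{FB}$ elected for a view $v'>0$. By the election rule, $R_{FB}$ collects $4f+1$ \fbl messages and sets $dec_{new}=\mathrm{majority}(\{decision\})$. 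The key observation is that $dec_{new}$ is the majority value among $4f+1$ reported decisions, so at least one of the contributing \fbl messages carries $dec_{new}$ and was sent by a \emph{correct} replica (a majority of $4f+1$ is at least $2f+1$, which exceeds the $f$ possible Byzantine replicas). By the inductive hypothesis, that correct replica's decision was itself either proposed by a client (if adopted in $view=0$) or inherited from an earlier, already-validated fallback decision --- which in turn, by induction, originates from a client.

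The main obstacle I anticipate is handling the chaining cleanly: I must argue that the decision a correct replica carries into view $v'$ was not itself a spurious value introduced by some \emph{previous} Byzantine fallback leader. This is exactly where I would lean on the fact that a correct replica adopts a \fbd decision (Step~4 of the divergent case) only after verifying the leadership proof $\{\fbl\}$, so every adopted decision is anchored to a legitimate majority computation; combined with the induction, no step ever introduces a value absent from the client-originated pool. I would also need to remark that the majority function is well-defined on a binary domain (Commit/Abort), so the propagated value is always one of the two genuine client-originated outcomes and never an invented third value. Finally I would note that this lemma, together with Lemma~\ref{proof:durable}, feeds directly into Theorem~\ref{proof:safety}: durability rules out contradicting an existing certificate, while validity rules out inventing a decision, and together they show the fallback preserves Byz-serializability and Byzantine independence.
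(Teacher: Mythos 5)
Your proof is correct and follows essentially the same route as the paper's: both arguments rest on the facts that correct replicas only log decisions validated against client-supplied vote tallies, that the majority value among $4f+1$ \fbl messages must be carried by at least one correct replica, and that an induction over views (anchored at view~$0$ and using the verified leadership proof in \fbd messages) chains every later fallback decision back to a client proposal. The only cosmetic difference is framing --- you induct on the view in which a decision first appears at a correct replica, whereas the paper argues about the first fallback leader to propose a decision and then inducts over subsequent leaders --- and your remark about the binary Commit/Abort domain is unnecessary, since the argument already ties the specific majority value to a correct replica's logged, client-originated decision.
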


\begin{proof}

Let $R_{FB}$ be the \textbf{first} elected fallback leader that proposes a decision (let its associated view be $v$).

By design, a correct replica only sends a message \fbl once it has logged a decision. Thus $R_{FB}$ is guaranteed to receive a quorum of $4f+1$ \fbl messages
all containing decisions. As the fallback waits for $4f+1$ messages, one decision must be in the majority. By the decision reconciliation rule, $R_{FB}$ proposes the majority decision. 

By assumption, $R_{FB}$ is the first fallback leader to propose a decision. Thus, all decisions included in correct replicas' \fbl messages were \one made by a client, and \two are consistent
with the provided \vtaly (correct replicas will verify that \ptwo messages have sufficient evidence for the decision).

Since any majority decision comprises at least $\geq 2f+1$ \fbl messages, it follows that at least one was created by a correct client, and is hence valid. Consequently, any decision that $R_{FB}$ can propose must have been issued by a client (and is valid).

It follows that all correct replicas will receive valid decisions. In view $v+1$, all decisions from correct replicas forwarded to the next fallback leader will also
be valid. The same reasoning held above regarding the decision rule applies. By induction it follows that for all future views and respective fallback leaders, any (valid) proposed decision must have been proposed by a client.

\underline{Aside}: Note, that if $R_{FB}$ instead is Byzantine, it may collect two \fbl message quorums with different majorities, and equivocate by sending different (valid) \fbd messages to different replicas. In this case, different correct replicas may not adopt the same decision (thus precluding the generation of a decision-certificate), but any decision is nonetheless valid as it was originally proposed by some client.

\end{proof}

We conclude our proof:

\begin{theorem}
Invoking the fallback mechanism preserves Theorem~\ref{proof:ser} and Theorem~\ref{proof:independence}.
\label{proof:safety}
\end{theorem}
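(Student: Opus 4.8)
The plan is to reduce the whole argument to slow-path decisions on the logging shard and then show that the two structural invariants underpinning Theorem~\ref{proof:ser}—uniqueness of a transaction's decision (Lemma~\ref{proof:logged}) and the impossibility of conflicting commits (Lemma~\ref{proof:visible})—together with the independence guarantee of Theorem~\ref{proof:independence}, all survive in the presence of fallback-produced certificates. The preamble to Lemma~\ref{proof:durable} already observes that the \emph{common case} of the fallback merely replays \pone/\ptwo messages under the normal-execution rules, so all earlier theorems carry over unchanged there; equally, slow-path \ccert/\acert consist only of \ptwor messages from $S_{\mathit log}$, and fallback leaders act only on $S_{\mathit log}$. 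Thus I only need to reason about the divergent case on a single shard.

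For safety, I would first invoke Lemma~\ref{proof:durable} to conclude that any decision certificate the fallback produces agrees with any pre-existing slow-path \ccert/\acert; hence a fallback run can never manufacture an \acert where a \ccert already exists, nor the converse. Combined with Lemma~\ref{proof:logged}, which already rules out two conflicting certificates arising in normal execution, this re-establishes that each transaction has a unique outcome even when fallback certificates are in play. Next I would appeal to Lemma~\ref{proof:validity}: every decision a fallback leader propagates was originally proposed by some client and is backed by a valid set of vote tallies $\{\vtaly\}$. Consequently, a fallback \ccert for $T$ still implies that, on every shard $T$ touched, at least $3f+1$ replicas—and therefore at least $2f+1$ correct ones—voted to commit. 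This is exactly the premise consumed by Lemma~\ref{proof:visible}, so the fallback cannot commit a transaction conflicting with an already-committed one, and the per-replica acyclicity of Lemma~\ref{proof:mvtso} continues to hold over the committed set. The global serialization graph therefore remains acyclic and Theorem~\ref{proof:ser} is preserved.

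For independence, I would argue that the fallback introduces no new channel by which a purely Byzantine coalition can dictate an outcome. The only new actor is the fallback leader, a single replica; but by Lemma~\ref{proof:validity} it can only relay a decision that some client already proposed on the basis of replica vote tallies, and those tallies necessarily contain a correct replica's vote (as established in the proof of Theorem~\ref{proof:independence}). A \emph{Byzantine} fallback leader can at worst equivocate between two legitimate, client-proposed decisions—the aside following Lemma~\ref{proof:validity}—which may stall progress but cannot fabricate a result that no correct replica endorsed. Since election itself requires $4f+1$ matching \fbl messages, of which at least $3f+1$ must come from correct replicas, no set of solely Byzantine participants can drive the fallback to a decision of their choosing. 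Hence Theorem~\ref{proof:independence} is preserved (again modulo the no-network-adversary assumption inherited from it).

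The mechanical combination of the two lemmas is routine; I expect the real obstacle to be arguing that the induction implicit in Lemmas~\ref{proof:durable} and~\ref{proof:validity} composes cleanly across an \emph{unbounded} sequence of fallback views and many concurrently recovering clients. Specifically, the delicate point is that a fallback certificate created in a high view cannot silently contradict a certificate—or a committed transaction—established much earlier, in a different view or on the fast path, especially when Byzantine leaders interleave equivocating \fbd messages. I would lean on the ``first fallback leader'' framing of the preceding lemmas and on the monotonicity of correct replicas' views: once a durable decision is witnessed by $3f+1$ correct replicas, those replicas never accept a contradicting decision in any later view, so the base case propagates by induction and no higher view can reconcile to the opposite outcome.
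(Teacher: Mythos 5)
Your proposal is correct and follows essentially the same route as the paper: reduce to the divergent case on the logging shard (the common case being a replay of the normal protocol), then combine Lemma~\ref{proof:durable} (fallback cannot contradict existing slow-path certificates, so Lemma~\ref{proof:logged} survives) with Lemma~\ref{proof:validity} (fallback decisions are client-proposed and backed by valid vote tallies, so fallback certificates are indistinguishable from normal-case ones and Lemmas~\ref{proof:mvtso} and~\ref{proof:visible} still apply). The paper's own proof is just a terser statement of this same combination, and your closing concern about unbounded views is already discharged inside the inductive "first fallback leader" arguments of Lemmas~\ref{proof:durable} and~\ref{proof:validity}, exactly as you suggest.
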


\begin{proof}
Lemma \ref{proof:validity} states, that a fallback leader can only propose decisions that were proposed by clients.

Lemma \ref{proof:durable} additionally guarantees, that once slow path \ccert/\acert exist, the fallback mechanism cannot change them. It follows that Lemma \ref{proof:logged} still holds.
Consequently, since any \ccert/\acert generated through the fallback mechanism are indistinguishable from normal case operation, Theorems \ref{proof:ser} and \ref{proof:independence} remain valid.
\end{proof}

\subsection{Fallback Liveness}
\label{proofs:fb_liveness}

We first show that during sufficiently long synchronous periods, the election of a correct fallback leader succeeds. Concretely, we say that after some unknown \textit{global synchronization time} (GST), an upper bound $\Delta$ hold for all message delays.

We note that replicas enforce exponential time-outs on each new view: a replica will not adopt a new view and start a new election until the previous view leader (whether client(s) or fallback replica) has elapsed its time-out.

For convenience, we re-iterate the  view change rules (\S \ref{sec:recovery}, box 2):
\begin{itemize}
\item \textbf{\textit{R1}}: If a view $v$ appears at least $3f+1$ times among the \textit{current views} received in {\em InvokeFB} (and $v$ is larger than the replicas current view), then the replica adopts a new current view $v_{new} = v + 1$. 
\item \textbf{\textit{R2}}: Otherwise, it sets its current view to the largest view that appears at least $f+1$ times among \textit{current views} and is larger than the replicas current view. 
\end{itemize}
When counting how frequently a view is present in \textit{current views}, R uses vote subsumption: the presence of view $v$ counts as a vote also for all $v' \leq v$.

\begin{lemma}
At any time there are at least $2f+1$ correct replicas that are at most one view apart.
\label{proof:invariant}
\end{lemma}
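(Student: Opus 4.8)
The plan is to prove a stronger invariant and read the lemma off it: at every instant, at least $2f+1$ correct replicas hold a view no smaller than $V_{\max}-1$, where $V_{\max}$ is the largest view held by any correct replica. This immediately gives the lemma, since by definition every correct view is $\leq V_{\max}$, so any correct replica with view $\geq V_{\max}-1$ sits in $\{V_{\max}-1,V_{\max}\}$, and hence these $2f+1$ replicas are at most one view apart. I would prove the invariant by induction over the discrete sequence of view-change events. Two structural facts make this clean: views are piecewise constant (they change only at adoption events), so it suffices to re-establish the invariant immediately after each event; and each correct replica's view is monotonically non-decreasing, so a view reported earlier in an \emph{InvokeFB} is a lower bound on that replica's current view.

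For the base case all $4f+1$ correct replicas start in view $0$, so the bound holds trivially. For the inductive step I would fix a single event in which a \emph{correct} replica $R$ adopts a new view (Byzantine view changes do not alter the set of correct views and are ignored) and split on the adoption rule. The recurring tool is vote subsumption combined with quorum arithmetic over $n=5f+1$: a view $v$ counted $k$ times, after subsumption, corresponds to $k$ reports of views $\geq v$, of which at most $f$ are Byzantine, leaving at least $k-f$ correct reporters of views $\geq v$.

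Under rule \textbf{R1}, $R$ advances to $v+1$ only after $v$ is reported at least $3f+1$ times, hence at least $2f+1$ \emph{correct} replicas reported view $\geq v$, and by monotonicity still hold view $\geq v$ at event time. If $v+1$ does not raise $V_{\max}$, the prior cluster is untouched (advancing $R$ only increases one view). If it does, then the new $V_{\max}=v+1$ and these $\geq 2f+1$ correct replicas all lie in $\{v,v+1\}=\{V_{\max}-1,V_{\max}\}$, re-establishing the invariant. Under rule \textbf{R2}, $R$ moves to the largest $u$ reported at least $f+1$ times; such a quorum contains at least one correct replica with view $\geq u$, so by monotonicity some correct replica already held view $\geq u$ before the event, forcing $u\leq V_{\max}$. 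Thus R2 cannot raise $V_{\max}$ at all, and the pre-existing cluster is preserved.

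The main obstacle I anticipate is the R2 case: its $f+1$ threshold guarantees only a single correct replica at the target view, which is far too weak to rebuild a $2f+1$ cluster directly. The key realization is that this apparent weakness is exactly the point — because a correct replica must already be at view $u$, R2 can never push the leading edge $V_{\max}$ forward, so I only need to show it preserves rather than restores the cluster. I would also take care to state monotonicity explicitly when transferring ``reported view $\geq v$'' into ``currently has view $\geq v$,'' since the views carried in an \emph{InvokeFB} were sampled by the client at an earlier time, and to note that the at-most-$f$ Byzantine reporters leave distinct correct reporters in each quorum.
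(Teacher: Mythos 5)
Your proof is correct, and its engine is the same quorum arithmetic the paper uses: a correct replica can only reach a genuinely new (maximal) view via rule R1, whose $3f+1$ threshold, minus at most $f$ Byzantine reporters, leaves $2f+1$ correct replicas that reported --- and, by monotonicity of views, still hold --- views $\geq v$, placing them within one view of the new maximum. Where you differ is in structure and completeness. The paper's proof is a single sentence that considers only R1, tacitly assuming that R2 never pushes any correct replica beyond the current correct maximum $V_{\max}$; your induction over adoption events turns that assumption into a proved step: R2's $f+1$ threshold guarantees a correct replica already reported a view at least as large as the target, so R2 can only let a replica catch up, never advance the frontier. Your explicit $V_{\max}-1$ invariant and the careful transfer from ``reported view $\geq v$ in an \emph{InvokeFB}'' to ``currently holds view $\geq v$'' (via monotonicity) also close the timing gap between when views are sampled by the client and when they are counted. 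What the paper's version buys is brevity; what yours buys is rigor --- as literally written, the paper's argument does not cover replicas that arrived at their view by catching up under R2, and your case analysis is exactly the patch that makes the invariant airtight.
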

\begin{proof}
A client must provide $3f+1$ matching view responses in the {\em InvokeFB} message (rule \textit{R1}) in order for replicas to adopt the next view and send a \fbl message. This implies that if a correct replica is currently in view $v$, there must exist at least $2f+1$ correct replicas in a view no smaller than $v-1$.
\end{proof}

\begin{theorem}
A correct client can reconcile correct replicas' views in at most two round-trips and one time-out.
\label{proof:bounds}
\end{theorem}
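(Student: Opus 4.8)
The plan is to show that, after GST, a correct client can drive all $4f+1$ correct replicas of $S_{\mathit{log}}$ into a single common view---so that they all send \fbl messages for the same view to the same fallback leader---using only two client$\to$replica$\to$client round-trips plus a single view time-out. I take a ``round-trip'' to be a client broadcast followed by collection of the replica replies, which after GST completes within $2\Delta$. The backbone of the argument is Lemma~\ref{proof:invariant}: if $v_{\max}$ denotes the largest view currently held by any correct replica, then at least $2f+1$ correct replicas hold views in $\{v_{\max}-1, v_{\max}\}$, and no correct replica holds a view above $v_{\max}$.

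\textbf{First round-trip (R2 catch-up).} The client broadcasts \rp and waits for $n-f=4f+1$ \rpr replies; at most $f$ are from Byzantine replicas, so it hears from at least $3f+1$ correct replicas, missing at most $f$ correct ones. Of the $\geq 2f+1$ correct replicas clustered in $\{v_{\max}-1,v_{\max}\}$ it therefore hears from at least $f+1$; since each reports a view $\geq v_{\max}-1$, vote subsumption (a reported view $v$ counting for every $v'\leq v$) yields at least $f+1$ matching votes for view $v_{\max}-1$. Placing these in its \textit{InvokeFB} and applying rule \textit{R2}, every correct replica whose current view is below $v_{\max}-1$ advances to at least $v_{\max}-1$; combined with the fact that no correct replica exceeds $v_{\max}$, all $4f+1$ correct replicas now lie in the single window $\{v_{\max}-1,v_{\max}\}$.

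\textbf{Second round-trip (R1 joint move) and the time-out.} Because every correct replica now reports a view $\geq v_{\max}-1$, a second \rp/\textit{InvokeFB} exchange lets the client collect at least $3f+1$ subsumed votes for $v_{\max}-1$, meeting the stronger threshold of rule \textit{R1}. This drives every correct replica still lagging at $v_{\max}-1$ to adopt $v_{\max}$ and emit an \fbl message for $v_{\max}$, joining those already there; the replicas already at $v_{\max}$ have nothing higher with $f+1$/$3f+1$ support and remain. All $4f+1$ correct replicas thus converge on view $v_{\max}$. The single time-out is precisely the exponential-backoff interval a replica must let elapse before it will abandon the current view's leader and advance; it is this one waiting period---rather than repeated elections---that lets the lagging and leading replicas settle in the same view.

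\textbf{Main obstacle.} The crux is the off-by-one split between the two adjacent views $v_{\max}-1$ and $v_{\max}$: rule \textit{R2}'s $f+1$ threshold is only strong enough to pull laggards up to $v_{\max}-1$, so a separate application of \textit{R1}'s $3f+1$ threshold---attainable only after the first round-trip has compacted all correct replicas into one window---is needed to make them jointly reach $v_{\max}$. The delicate point is verifying that \textit{R1}'s view-comparison condition fires for replicas sitting exactly at $v_{\max}-1$ while leaving those at $v_{\max}$ untouched. I must also check that Byzantine replicas cannot subvert the counts: by under-reporting or equivocating they can neither prevent the intended $f+1$/$3f+1$ subsumed thresholds for $v_{\max}-1$ from being met, nor manufacture $f+1$ honest-looking votes for any view above $v_{\max}$ (which would contradict Lemma~\ref{proof:invariant}). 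Finally, I would confirm that one time-out genuinely suffices---that the backoff gate does not force a third round-trip or admit a livelock---which is the subtlest interaction and relies on the exponential-timer discipline recalled at the start of this subsection.
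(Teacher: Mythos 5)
Your proposal is correct and follows essentially the same route as the paper's own proof: it rests on Lemma~\ref{proof:invariant}, uses a first round-trip in which $f+1$ (subsumed) votes for a view $\geq v_{\max}-1$ trigger rule \textit{R2} to compact all correct replicas into the window $\{v_{\max}-1, v_{\max}\}$, and a second round-trip in which $3f+1$ subsumed votes trigger rule \textit{R1} to move everyone to a common view, with the single time-out accounting for the view-timer between the two adjacent views. The paper additionally notes the favorable one-round-trip case where $3f+1$ matching views arrive immediately, but since the theorem claims only an upper bound, your worst-case analysis suffices.
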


\begin{proof}
Let $v$ be the highest current view of any correct replica. Since there are $n=5f+1$ total replicas, an interested client trying to \textit{InvokeFB} can wait for at least $n-f=4f+1$ replica replies. If the client receives $3f+1$ matching views for a view $v' \geq v-1$ it can use \textit{R1} to propose a new view $v'' = v'+1 \geq v$ that will be accepted by all correct replicas in just a single round-trip.

If a client cannot receive such a quorum, e.g. due to temporary view inconsistency, it must reconcile the views first. This is possible in a single additional step: By Lemma \ref{proof:invariant} any set of $4f+1$ replica responses must contain at least $f+1$ correct replicas' votes for a view $v' \geq v -1 $. Using \textit{R2} a replica may skip ahead to $v'$. Thus, in a second round, the client will be able to receive $\geq 4f+1$ replica replies for a view $\geq v')$, enough to apply \textit{R1} and move all replicas to a common view $\geq v$. We point out, that while only two round-trips of message delays are required, a client may still have to wait out the view time-out between $v-1$ and $v$ (where $v$ is the highest view held by any correct replica).
 
Thus it follows, that a correct client requires at most two round-trips and one time-out to bring all correct replicas to the same view. 
\end{proof}

\underline{Note:} If only a few replicas are in the highest view (i.e. less than $f+1$ -- otherwise which we could potentially catch up in a single round-trip using \textit{R2}), and $4f+1$ replicas are in the same view after using \textit{R2}, then a client does not even need another roundtrip (and a time-out) in order to guarantee the successful leader election (if all $4f+1$ replicas send an \fbl message. If some are Byzantine, this might not happen of course).

Next, we show that during sufficiently long periods of synchrony, a correct fallback is eventually elected.

\begin{lemma}
After GST, and in the presence of correct interested clients, a correct fallback leader is eventually elected.
\label{proof:correct_leader}
\end{lemma}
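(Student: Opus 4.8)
The plan is to combine the reconciliation guarantee of Theorem~\ref{proof:bounds} with the deterministic, rotating assignment of fallback leaders and the use of exponential timeouts after GST. At a high level I would argue that a correct interested client can repeatedly drive all $4f+1$ correct replicas into a common view, that each such view gives the deterministically chosen leader a genuine chance to be elected, and that---because the leader index advances by one modulo $n$ with each view increment---within any window of $f+1$ consecutive views at least one leader is correct. Once a correct leader is reached in a reconciled view, it collects $4f+1$ matching \fbl messages and is elected.

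First I would fix the post-GST setting: every message between correct parties is delivered within a bound $\Delta$, and replicas impose exponentially growing per-view timeouts. The base case invokes Theorem~\ref{proof:bounds}: a correct interested client brings all $4f+1$ correct replicas to a common view $v$ in at most two round-trips and one timeout. If the leader $L_v$ designated for view $v$ (by the rule $v + (\mathit{id}_T \bmod n)$) is correct, then all $4f+1$ correct replicas send it \fbl messages carrying $view_{elect}=v$; after GST these arrive within $\Delta$, so $L_v$ collects $4f+1$ matching \fbl messages, declares itself elected, and---being correct---broadcasts a well-formed \fbd, completing the election with a correct leader.

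Next I would handle a Byzantine leader. If $L_v$ is faulty it may withhold or malform its \fbd; correct replicas then fail to adopt a decision before their view-$v$ timeout elapses and, prompted by the correct client's next \textit{InvokeFB}, advance to view $v+1$. Because the leader index increases by one (mod $n$) as the view increments, the leaders of the $f+1$ consecutive views $v, v+1, \dots, v+f$ are $f+1$ distinct replicas; since at most $f$ replicas are Byzantine, at least one leader $L_{v+k}$ with $0 \le k \le f$ is correct. Re-applying Theorem~\ref{proof:bounds} to reconcile all correct replicas into view $v+k$ and repeating the base-case argument shows this correct leader is elected. Lemma~\ref{proof:invariant} is what makes each reconciliation step feasible, since it supplies the $f+1$ matching views a client needs to pull all correct replicas forward.

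The hard part will be the timeout/liveness interplay: I must rule out that Byzantine replicas or clients use spurious \textit{InvokeFB} messages to race correct replicas past the correct leader's view before the election can complete. The key leverage is the exponential timeout---after GST it eventually exceeds the $O(\Delta)$ cost of one reconciliation-plus-election round, so once the correct leader's view is reached, correct replicas remain there long enough to exchange \fbl and \fbd messages. I would also confirm that view numbers advance monotonically and that the adversary can at most delay, never skip, the correct leader's turn: the rotation is deterministic in the view number, and the view rules (R1/R2 with vote subsumption) only move correct replicas to strictly larger views vouched for by at least one correct replica, so no view in the window $v, \dots, v+f$ can be silently bypassed. Together these yield that a correct fallback leader is elected after at most $f+1$ view changes following GST.
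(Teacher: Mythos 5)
Your proposal is correct and follows essentially the same route as the paper's proof: both rely on Theorem~\ref{proof:bounds} to reconcile all correct replicas into a common view, on the exponential timeouts exceeding $\Delta$ after GST so that an elected leader has sufficient tenure to act, on the non-skipping nature of elections driven by a correct client broadcasting to all replicas, and on the round-robin leader rotation to conclude that a correct leader is reached within at most $f+1$ view changes. Your write-up is merely more explicit than the paper's terse argument (e.g., spelling out why $f+1$ consecutive views yield $f+1$ distinct candidate leaders), but no new idea or decomposition is introduced.
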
 

\begin{proof}
In the presence of a correct interested client, Byzantine clients cannot stop the successful election of a new Fallback by continuously invoking a view-change on only a subset of replicas. This follows straightforwardly from the fact that after GST, once the time-outs for views grow large enough to fall within $\Delta$, a correct client will \one bring all correct replicas to the same view, and \two there is sufficient time for the fallback replica to propose a decision before replicas move to the next view (and consequently reject any proposal from a lower view).

Moreover, election is non-skipping, as a correct client will broadcast a new-view invocation to all replicas. 

Since fallback leader election is round-robin, it follows that a correct fallback replica will be elected after at most $f+1$ view changes, with a sufficiently long tenure to reconcile a decision across all correct replicas.
\end{proof}~\\

\begin{theorem}
A correct client eventually succeeds in acquiring either a \ccert or \acert for any transaction of interest.
\label{proof:live}
\end{theorem}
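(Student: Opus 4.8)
The plan is to combine the retrieval guarantee provided by \sys's recovery invariant with the bounded-time view reconciliation of Theorem~\ref{proof:bounds} and the eventual election of a correct fallback leader from Lemma~\ref{proof:correct_leader}. Fix a transaction $T$ of interest to a correct client $C$: by definition $C$ either acquired a write-read dependency on $T$ or had one of its own transactions aborted because of a conflict with $T$. In either case the recovery invariant guarantees that $C$ can retrieve $T$'s \pone message: a dependency is recorded only when $f+1$ replicas vouch for the prepared version (so at least one correct replica holds, and will re-serve, $T$'s metadata), and a correct transaction aborts only when $f+1$ replicas report the conflict. Thus $C$ can always (re-)drive $T$'s Prepare phase on $T$'s behalf.

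First I would dispatch the dependency structure, since it is a second, independent source of blocking. A replica returns Vote-Commit for $T$ only after every transaction in $Dep_T$ has reached a decision (Lines 15--19 of Algorithm~\ref{a:MVTSO}), so a stalled dependency can leave $T$'s votes pending. However, every write-read dependency edge $T \to T'$ requires $ts_{T'} < ts_T$, so the transitive dependency closure of $T$ is a finite DAG whose edges strictly decrease timestamp. By well-founded induction over this order, $C$ can first finish each of $T$'s dependencies (recursively invoking the fallback on each), after which every correct replica's vote on $T$ is determined: if any dependency aborts, replicas vote abort and $C$ obtains an \acert; otherwise the dependency-induced blocking disappears and we may assume $T$ itself has no pending dependencies.

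Next I would handle the terminal case, where $T$'s votes are settled. Re-executing the Prepare phase (the common case of \S\ref{sec:recovery}) yields, from each queried replica, either a \poner, a \ptwor, or an already-formed \ccert/\acert. If the replies match, $C$ fast-forwards to the appropriate next step and assembles a decision certificate in a single additional round trip. The only remaining difficulty is the divergent case, where $C$ receives non-matching \ptwor messages. Here I invoke the fallback: by Theorem~\ref{proof:bounds}, $C$ reconciles the views of all correct replicas in at most two round trips and one time-out, and by Lemma~\ref{proof:correct_leader}, after GST a correct fallback leader is eventually elected within $f+1$ view changes. Since view time-outs grow exponentially, eventually some correct leader's tenure exceeds the post-GST message bound $\Delta$, giving it enough time to broadcast an \fbd message that all $4f+1$ correct replicas adopt in the same $view_{elect}$ before any further view change is triggered.

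Once all $4f+1$ correct replicas hold matching decision and $view_{decision}$ values, $C$ collects the required $n-f$ matching \ptwor replies and forms $\dcert_{S_{\mathit{log}}}$, i.e.\ a \ccert or \acert for $T$; by Lemmas~\ref{proof:durable} and~\ref{proof:validity} this decision is both unique and valid. The main obstacle I expect is the interaction between the two sources of blocking---pending dependencies and view divergence---rather than either in isolation: I must ensure that the recursive finishing of the dependency DAG terminates and does not reintroduce divergence at $T$, and that the exponential-backoff argument for a sufficiently long correct-leader tenure composes cleanly across the finitely many dependencies that may themselves require a fallback invocation.
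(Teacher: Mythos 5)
Your proof is correct and follows essentially the same route as the paper's: finish a transaction of interest by re-driving its Prepare phase in the common case, and in the divergent case rely on Theorem~\ref{proof:bounds} and Lemma~\ref{proof:correct_leader} to elect a correct fallback leader that reconciles a single decision across correct replicas in $S_{\mathit{log}}$, yielding the $n-f$ matching \ptwor replies needed to form a \ccert or \acert. The only place you go beyond the paper is the explicit well-founded induction over the timestamp-decreasing dependency DAG; the paper handles this implicitly by folding a client's transactions and all their dependencies into the set $\mathit{Interested}_c$ and treating each member uniformly, so your induction simply makes the termination of that recursion explicit.
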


\begin{proof}
First, we note, that a timely client can trivially complete all of its own transactions that have no dependencies. However, if a client is slow, or its transaction has dependencies, it may lose autonomy over its own transaction. For a given client c, we define the set \textit{Interested$_c$} to include its own transactions and all their dependencies, as well as any other arbitrary transactions whose completion a client is interested in. 

We distinguish two cases for each $TX \in Interested_c$ that has timed out on its original client: 
\one An interested client manages to receive a \ccert/\acert by either issuing a \rp message and receiving a Fast-Path Threshold of \poner messages for all involved shards, or by issuing a new \ptwo message and receiving $n-f$ \ptwor messages from the logging shard. In this case, a client is able to complete the transaction independently as any client may broadcast \ccert's/\acert's to the involved shards. Theorems \ref{proof:ser} and \ref{proof:independence} are maintained as this case follows the normal-case protocol operation.

\two An interested client cannot obtain decision certificates and starts a Fallback invocation. By Lemma \ref{proof:correct_leader}, during a sufficiently long synchronous period and the presence of an interested client a correct Fallback replica will be elected after at most $f+1$ election rounds. Such a Fallback replica will reconcile a consistent decision across all correct replicas in the log-shard, thus allowing the interested client to receive a $n-f$ matching \ptwor messages. This allows it to construct the respective \ccert/\acert, allowing it to complete the Writeback Phase. 

\end{proof}

\subsection{Revisiting Vote Subsumption}
\label{proofs:subsumption}
Our discussion of the recovery protocol has thus far relied on the idea of \textit{vote subsumption} for validating and choosing a new current view when invoking a fallback leader election. Vote subsumption allows a client to send an {\em InvokeFB} message using a set of \textit{views} that includes replica signatures cast for \textbf{different} current views: when counting how frequently a view is present in the received current views, a replica considers the presence of view $v$ as a  vote also for all $v' \leq v$.

As a consequence, while \textit{aggregate signatures} \cite{
gentry2006identity, boneh2003aggregate} can (always) be applied to compress signatures, more efficient \textit{multi-signatures} \cite{boneh2018compact, micali2001accountable, boldyreva2003threshold, itakura1983public} or \textit{threshold signatures} \cite{shoup2000practical, cachin2005random, boldyreva2003threshold} cannot, since they rely on matching signed messages \changebars{}{(\S~\ref{s:opt})}. In the following we briefly show that, while the current \sys{} prototype uses it to simplify both its protocol and its implementation, vote subsumption is in fact not necessary for progress. Clients, by reasoning carefully about the set of current replica views received, can always wait for matching responses. Thus, \sys{} is indeed capable of leveraging all of the above signature aggregation schemes.

Before continuing the discussion on vote subsumption, we briefly submit the following tangential optimization:
\par \textbf{Aside (Optimization):} A replica in view 0 can accept an \textit{InvokeFB} message without a proof. Thus, a client that proposes view 1 need not attach a set \textit{views} of replica current views and signatures.

Since replicas only move without proof from view 0 to view 1 (and the view number of every replica is at least 0), this optimization trivially maintains the invariant that a majority of correct replicas are no more than one view apart (Lemma \ref{proof:invariant}). Given synchrony, absence of Byzantine fallback leaders, and presence of a correct interested client, only a single fallback invocation is necessary, requiring no views or signatures at all. 

Next, we show that in the general (non-gracious~\ref{clement09aardvark}) case a client can indeed always receive a matching set of current view replies without live-locking.

\begin{lemma}
An interested client trying to send $InvokeFB$ can always use matching current views.
\label{proof:matching}
\end{lemma}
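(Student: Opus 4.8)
The plan is to prove the statement as an \emph{existence} claim first — that a threshold of correct replicas genuinely share a common current view at every instant, so the client never has to resort to subsumption's loose counting — and only afterwards to invoke partial synchrony and the bounded reconciliation of Theorem~\ref{proof:bounds} for the timing half. The core is purely combinatorial. By Lemma~\ref{proof:invariant}, at any time at least $2f+1$ correct replicas occupy two consecutive views, say $\{v^\ast-1,\,v^\ast\}$ where $v^\ast$ is the highest view held by any correct replica. By the pigeonhole principle one of these two views is held by at least $\lceil(2f+1)/2\rceil = f+1$ of them. Hence there \emph{always} exists a view $v'$ for which $f+1$ correct replicas would return \emph{literally identical} current views — exactly the input that rule \textit{R2} needs, with no subsumption. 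First I would state and prove this sub-claim, since it is the linchpin: the matching set demanded by the multi-/threshold-signature schemes physically exists at all times, independently of GST.

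Second, I would argue the client can act on these genuine matches. Current views are monotone and correct replicas reply under partial synchrony, so a client that (re-)issues its recovery query collects, within a bounded delay after GST, the $f+1$ literally-matching signatures for $v'$ guaranteed above and applies \textit{R2} using exact matches only. This drags every lagging correct replica up into $\{v^\ast-1,\,v^\ast\}$. I would then observe that once all $4f+1$ correct replicas are confined to two consecutive views, a further \textit{R2} (on whichever of the two carries the $f+1$ literal matches) followed by \textit{R1} unifies them at a single view; at that point $4f+1\ge 3f+1$ correct replicas hold identical views, so \textit{R1}'s $3f+1$-match requirement is met \emph{verbatim}. This is precisely the two-round-trip-plus-one-time-out reconciliation of Theorem~\ref{proof:bounds}, re-instantiated so that every threshold ($f+1$ for \textit{R2}, $3f+1$ for \textit{R1}) is attained by exact matches rather than by subsumption.

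Finally, I would address the absence of live-lock, which is the real content of ``always.'' The progress measure is the pair (highest correct view $v^\ast$, number of correct replicas at $v^\ast$); it is non-decreasing because views never regress, each successful \textit{R2}/\textit{R1} step strictly increases it, and it is bounded (at most $4f+1$ replicas, and $v^\ast$ advances only after a time-out elapses). Byzantine replicas may withhold or forge their own views, but they can neither lower a correct replica's monotone view nor suppress the $f+1$ correct holders of $v'$ indefinitely, so they cannot stall the measure forever. I would conclude that the client always reaches a matching set, making vote subsumption a convenience rather than a necessity, and \sys{} may safely adopt the identically-signed-message aggregation schemes of \S\ref{proofs:subsumption}.

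The step I expect to be the main obstacle is the second one: showing that acting on the genuine $f+1$-matching set actually \emph{concentrates} correct replicas at a single view (so \textit{R1}'s $3f+1$-literal-match threshold is eventually met), rather than merely keeping them oscillating between two adjacent views. The delicate point is that Lemma~\ref{proof:invariant} guarantees $f+1$ matches in \emph{one} of the two views but not necessarily the top one, so I must rule out an adversarial schedule that perpetually splits the clustered replicas just below the $3f+1$ line; the monotone progress measure above, together with the exponential time-outs that forbid unbounded view churn, is what I would use to close this gap.
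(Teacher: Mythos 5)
Your proposal is correct, and its combinatorial core is the same as the paper's: Lemma~\ref{proof:invariant} plus pigeonhole guarantees that $f+1$ correct replicas literally share one of the two top views, so \textit{R2}'s threshold can always be met with exact matches, after which \textit{R2}-then-\textit{R1} bootstraps all $4f+1$ correct replicas to a common view. Where you diverge is in organization and viewpoint. The paper argues operationally from the client's received multiset of (possibly forged) views: it gives an explicit filtering rule — if the largest received view $v$ is not accompanied by at least $f$ other received views $v' \geq v-1$, then $v$ cannot have come from a correct replica (by Lemma~\ref{proof:invariant} and the fact that at least $f+1$ of the $2f+1$ clustered correct replicas appear in any $4f+1$-reply quorum), so the client discards it and waits for a replacement — and then closes each case by letting the client ``keep waiting'' until $f+1$ literal matches for one of the two candidate views $v$, $v'$ arrive. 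You instead argue declaratively from the global state (the matching set \emph{exists} at every instant, GST-free) and handle the adversary via monotonicity and a progress measure; this is sound, since any $f+1$ literal matches must include a correct replica, but it leaves implicit the identification problem the paper's filtering rule solves: the client cannot see $v^\ast$ directly and must decide \emph{which} candidate view to keep waiting on amid Byzantine-inflated claims. Note also that your second and third steps (concentration at a single view and live-lock freedom) are not part of this lemma in the paper — they constitute its separate follow-on theorem that vote subsumption is unnecessary for progress — and the obstacle you flag at the end is dispatched exactly as you suspect: \textit{R2} needs only $f+1$ matches to lift the lagging cluster to the higher of the two adjacent views, at which point all $4f+1$ correct replicas match and \textit{R1}'s $3f+1$ threshold is met verbatim, so no adversarial schedule can perpetually split the cluster.
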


\begin{proof}
There are $n=5f+1$ total replicas, and hence an interested client (trying to submit \textit{InvokeFB}) can wait for at least $4f+1$ current view messages from different replicas.

Let view $v$ be the largest view among the received views. If $v$ was sent by a correct replica, then it follows from Lemma \ref{proof:invariant} that there \textbf{must} be at least $f$ other views $v' \geq v - 1$ among the received views. If there are not, then the client can conclude that $v$ must have been sent by a Byzantine replica, and hence it can remove $v$ (and the replica that sent it) from consideration and wait for an additional current view message. 
We assume henceforth that $v$ is the largest observed view that meets the above criterion. Thus, the client received at least $f+1$ current views ($v$ and $v'$) that are at most one view apart from one another. We can distinguish three cases:

1) There are $3f+1$ matching views: The client can use rule \textit{R1} to send an $InvokeFB$ message using only matching views. \qed

2) There are $f+1$ matching views: The client can use rule \textit{R2} to send an $InvokeFB$ message using only matching views. \qed

3) There are fewer than $f+1$ matching views. However, as established above, there at least $f+1$ current views ($v$ and $v'$) that are at most one view apart from one another. We can distinguish two sub-cases:

\begin{itemize}
\item \underline{$v$ was sent by a correct replica:} Then it follows from Lemma \ref{proof:invariant} that at least $2f+1$ total correct replicas must be in a view $v'' \geq v-1$. Thus, if it does not observe at least $f+1$ matching current view messages for either view $v$ or $v'$, then  the client can keep waiting. By the pigeonhole principle, it must receive $f+1$ matching current views for $v$ or $v'$.

\item \underline{$v$ was not sent by a correct replica:} Then it follows that $v'$ must have been sent by a correct replica, since $f+1$ replicas reported either current view $v$ or $v'$. It follows from Lemma \ref{proof:invariant} that there are at least $2f+1$ correct replicas in view $v'' \geq v' - 1$, and the client can keep waiting until it observe either $f+1$ matching current views for $v’$ or for $v''$.
\end{itemize}

In both cases the client is able to wait for $f+1$ matching current views, and hence it can use \textit{R2} to send an $InvokeFB$ message using only matching views. 
\end{proof}

Since the client can always use matching messages, it follows that both multi-signature and threshold signature schemes are applicable to \textit{InvokeFB} messages as well.

Lemma \ref{proof:matching} does not suffice to conclude progress, since we only showed that a client can always succeed in receiving enough matching replies to apply \textit{R2}.  However, it follows straightforwardly that a client will also be able to apply rule \one within at most another exchange, and thus ensure progress for continued fallback leader election.

\begin{theorem}
Vote subsumption is not necessary for progress in \sys. 
\end{theorem}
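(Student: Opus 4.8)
The plan is to show that every use of vote subsumption in the fallback protocol can be discharged using only \emph{exactly matching} current views, so that the liveness guarantees of Section~\ref{proofs:fb_liveness} survive unchanged while the \textit{InvokeFB} message becomes amenable to multi- and threshold-signature aggregation. Vote subsumption enters the protocol in exactly one place: the counting used by rules \textit{R1} and \textit{R2} when a client assembles the set \textit{views} for an \textit{InvokeFB} message. Hence it suffices to argue that a client can always drive the election forward by submitting sets of genuinely identical signed views.

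First I would invoke Lemma~\ref{proof:matching}, which already establishes the base case: an interested client can always collect enough matching current views to apply \textit{R2}, thereby advancing replicas to the largest view held by at least $f+1$ replicas. The remaining gap is that \textit{R2} alone reconciles views but does not trigger a fresh election in a strictly higher view via \textit{R1} (which demands $3f+1$ matching views). I would close this gap with a two-round argument mirroring Theorem~\ref{proof:bounds}, but restricted to matching views: after one matching-view \textit{R2} step brings all $4f+1$ correct replicas to a common view $v'$, a subsequent round lets the client observe $\geq 3f+1$ identical current views for $v'$, at which point \textit{R1} can be applied to move to $v'+1$ and elect its leader---again using only matching signatures. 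Lemma~\ref{proof:invariant} (at least $2f+1$ correct replicas lie within one view of each other) together with a pigeonhole argument of the kind used in the proof of Lemma~\ref{proof:matching} guarantees that the needed matching quorum always materializes once the client waits for $4f+1$ replies, so the client never has to fall back on subsumption.

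With the election shown to advance using matching views only, I would observe that none of the downstream liveness results actually depended on subsumption: they required only the \emph{ability} to reconcile views and advance the round-robin leader. Thus Lemma~\ref{proof:correct_leader} (a correct fallback leader is eventually elected after GST) and Theorem~\ref{proof:live} (a correct client eventually acquires a \ccert or \acert) carry over verbatim, with each \textit{InvokeFB} now built from matching views. Since matching signed views are precisely what multi-signatures and threshold signatures require, the conclusion that vote subsumption is dispensable---and that these aggregation schemes are applicable---follows immediately.

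The main obstacle I anticipate is the termination / no-live-lock part of the matching-view argument: I must ensure that restricting to exactly matching views does not open a new avenue for Byzantine replicas to stall progress by scattering their reported views across many distinct values. The crux is to show that, after GST and with the exponential view time-outs eventually exceeding $\Delta$, waiting for $4f+1$ replies combined with the $2f+1$-within-one-view invariant of Lemma~\ref{proof:invariant} \emph{forces} a matching quorum of size $f+1$ (for \textit{R2}) and ultimately $3f+1$ (for \textit{R1}) to appear among correct replicas, so the client advances within a bounded number of rounds rather than being made to wait indefinitely.
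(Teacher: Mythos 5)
Your proposal is correct and follows essentially the same route as the paper: invoke Lemma~\ref{proof:matching} to catch replicas up using only matching views, then use the within-one-view invariant and a pigeonhole argument to obtain a matching quorum ($f+1$ for \textit{R2} or $3f+1$ for \textit{R1}) that moves all correct replicas to the top view and lets the election proceed. Your only imprecision is the claim that the \textit{R2} step leaves all correct replicas in a \emph{common} view---they may still span two adjacent views---but the paper's proof resolves exactly this case with the pigeonhole disjunction ($f+1$ matching votes for $v_{max}$ or $3f+1$ for $v_{max}-1$) that you also appeal to, so nothing essential is missing.
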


\begin{proof}
Since clients are guaranteed to gather sufficient catch up messages (Lemma \ref{proof:matching}), it is guaranteed that, after catching up, there will be at least  $4f+1$ correct replicas within at most one view of each other. Let $v_{max}$ be the larger of the two views. By the pigeonhole principle, a client is guaranteed to be able to wait for least $f+1$ matching votes for $v_{max}$ or $3f+1$ matching votes for $v_{max} -1$. Thus, the client can move all replicas to view $v_{max}$ (using only matching views), ensuring progress.
\end{proof}

\underline{Note:} A client may ensure progress (successful leader election) during a “stable” timeout (after GST) where replicas are not changing views while the client is waiting for replies. If this is not the case, then waiting for the "first" received message from a replica is potentially insufficient to receive matching replies: A client can (and needs to) keep waiting until it \textit{does} receive matching messages, potentially displacing old replica votes with newer ones. Eventually, given partial synchrony, there will be a long enough timeout to ensure stability and progress.  Replicas may simply send their new view to each interested client everytime they increment it, or, clients may "ping" replicas periodically after a timeout to query their latest view.

}{}

\end{document}